\newtheorem{lemma}{Lemma}[section]
\newtheorem{theorem}[lemma]{Theorem}
\newtheorem{proposition}[lemma]{Proposition}
\theoremstyle{definition}
\newtheorem{definition}[lemma]{Definition}
\newtheorem{remark}[lemma]{Remark}
\DeclareMathOperator*{\argmin}{arg\,min}
\newcommand{\CC}{{\mathbb C}}
\newcommand{\NN}{{\mathbb N}}
\newcommand{\RR}{{\mathbb R}}
\newcommand{\ZZ}{{\mathbb Z}}
\newcommand{\Hcal}{\mathcal{H}}
\newcommand{\Scal}{\mathcal{S}}
\def\benm{\begin{enumerate}}            
\def\eenm{\end{enumerate}}              
\title[Adapted Decimation for Arbitrary Sigma-Delta Quantization]{Adapted Decimation on Finite Frames for Arbitrary Orders of Sigma-Delta Quantization}
\date{}
\subjclass[2010]{42C15, 94A08, 94A34}
\keywords{Decimation, Sigma-Delta Quantization, Unitarily Generated Frames}
\author{Kung-Ching Lin}
\address{Norbert Wiener Center\\
         Department of Mathematics \\
         University of Maryland \\
         College Park, MD 20742 \\
         USA, \\
         Tel. +1 301-405-5058; Fax. +1 301-314-0827}
\email{kclin@math.umd.edu}
\begin{document}

\newcounter{bean}

\begin{abstract}
In Analog-to-digital (A/D) conversion, signal decimation has been proven to greatly improve the efficiency of data storage while maintaining high accuracy. When one couples signal decimation with the $\Sigma\Delta$ quantization scheme, the reconstruction error decays exponentially with respect to the bit-rate. We build on our previous result, which extended signal decimation to finite frames, albeit only up to the second order. In this study, we introduce a new scheme called adapted decimation, which yields polynomial reconstruction error decay rate of arbitrary order with respect to the oversampling ratio, and exponential with respect to the bit-rate. 
\end{abstract}

\maketitle


\section{Introduction}\label{intro}

\subsection{Background}\label{background}
	Analog-to-digital (A/D) conversion is a process where bandlimited signals, e.g., audio signals, are digitized for storage and transmission, which is feasible thanks to the classical sampling theorem. In particular, the theorem indicates that discrete sampling is sufficient to capture all features of a given bandlimited signal, provided that the sampling rate is higher than the Nyquist rate.\par
	
	Given a function $f\in L^1(\mathbb{R})$, its Fourier transform $\hat{f}$ is defined as
\[
	\hat{f}(\gamma)=\int_{-\infty}^{\infty} f(t)e^{-2\pi\imath t\gamma}\, dt.
\]
	The Fourier transform can also be uniquely extended to $L^2(\mathbb{R})$ as a unitary transformation.\par
\begin{definition}
	Given $f\in L^2(\mathbb{R})$, $f\in PW_{[-\Omega,\Omega]}$ if its Fourier transform $\hat{f}\in L^2(\mathbb{R})$ is supported in $[-\Omega,\Omega]$. 
\end{definition}
	An important component of A/D conversion is the following theorem:
\begin{theorem}[Classical Sampling Theorem]
	Given $f\in PW_{[-1/2,1/2]}$, for any $g\in L^2(\RR)$ satisfying 
\begin{itemize}
\item	$\hat{g}(\omega)=1$ on $[-1/2, 1/2]$
\item	$\hat{g}(\omega)=0$ for $|\omega|\geq 1/2+\epsilon$,
\end{itemize}
	 with $\epsilon>0$ and $T\in(0,1-2\epsilon)$, $t\in\mathbb{R}$, one has\\
\[
	f(t)=T\sum_{n\in\mathbb{Z}}f(nT)g(t-nT),
\]
	where the convergence is both uniform on compact sets of $\RR$ and in $L^2$.
\end{theorem}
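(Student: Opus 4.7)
The plan is to reduce the identity to a Fourier-series statement on one period, via the $1/T$-periodization of $\hat{f}$, and then exploit the cutoff $\hat{g}$ to isolate the zeroth translate.

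First I would introduce the $1/T$-periodic function $F(\omega)=\sum_{n\in\ZZ}\hat{f}(\omega-n/T)$, which lies in $L^2([-1/(2T),1/(2T)])$ because $\hat{f}\in L^2(\RR)$ is compactly supported. A direct computation of its Fourier coefficients gives
\[
c_k \;=\; T\int_{-1/(2T)}^{1/(2T)}F(\omega)e^{-2\pi\imath kT\omega}\,d\omega \;=\; T\int_{\RR}\hat{f}(\omega)e^{-2\pi\imath kT\omega}\,d\omega \;=\; Tf(-kT),
\]
so $F(\omega)=T\sum_{n\in\ZZ}f(nT)e^{-2\pi\imath nT\omega}$ with $L^2$-convergence on one period. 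Next I would use the constraint $T\in(0,1-2\epsilon)$, which gives $1/T>1/(1-2\epsilon)>1+2\epsilon$, to conclude that the translates $\hat{f}(\cdot-n/T)$ with $n\neq 0$ vanish on $[-1/2-\epsilon,1/2+\epsilon]\supseteq\supp\hat{g}$. Combined with $\hat{g}\equiv1$ on $\supp\hat{f}\subseteq[-1/2,1/2]$, this yields the pointwise (a.e.) identity $\hat{g}(\omega)F(\omega)=\hat{g}(\omega)\hat{f}(\omega)=\hat{f}(\omega)$.

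Multiplying the Fourier series for $F$ by the bounded, compactly supported function $\hat{g}$ converts the $L^2$-convergence on one period into $L^2$-convergence on $\RR$, yielding
\[
\hat{f}(\omega)\;=\;T\sum_{n\in\ZZ}f(nT)\,\hat{g}(\omega)\,e^{-2\pi\imath nT\omega}\qquad\text{in } L^2(\RR).
\]
Applying the inverse Fourier transform termwise, and noting that $\hat{g}(\omega)e^{-2\pi\imath nT\omega}$ is the Fourier transform of $g(\cdot-nT)$, gives the reconstruction formula $f(t)=T\sum_{n}f(nT)g(t-nT)$ with $L^2$-convergence, by Plancherel.

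For uniform convergence on compact sets, I would observe that every partial sum $S_N(t)=T\sum_{|n|\leq N}f(nT)g(t-nT)$ lies in $PW_{[-1/2-\epsilon,1/2+\epsilon]}$, which is a reproducing kernel Hilbert subspace of $L^2(\RR)$ with bounded kernel $K(t,s)$. Hence $\sup_{t\in\RR}|S_N(t)-f(t)|\leq\sqrt{K(t,t)}\,\|S_N-f\|_{L^2}$ is controlled uniformly, so the $L^2$-convergence upgrades to uniform convergence on all of $\RR$, and in particular on compact sets. The main technical point is Step~2, namely the careful bookkeeping of the non-overlap of the translates of $\hat{f}$ with $\supp\hat{g}$; once that support arithmetic is verified, the rest of the argument is a routine combination of Fourier series, Plancherel, and the reproducing-kernel structure of Paley--Wiener space.
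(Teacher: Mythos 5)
The paper never proves this statement: it appears in Section \ref{background} as the classical sampling theorem, quoted as standard background for A/D conversion, so there is no in-paper argument to compare yours against. Your periodization proof is the standard textbook argument and is essentially correct. The Fourier-coefficient computation for $F(\omega)=\sum_n\hat f(\omega-n/T)$, the support arithmetic ($1/T>1/(1-2\epsilon)>1+2\epsilon$, so the translates $\hat f(\cdot-n/T)$, $n\neq0$, miss $\supp\hat g\subseteq[-1/2-\epsilon,1/2+\epsilon]$), and the reproducing-kernel upgrade from $L^2$ to uniform convergence (which in fact yields uniform convergence on all of $\RR$, stronger than the stated claim) are all sound. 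The one point to flag is your phrase ``the bounded, compactly supported function $\hat g$'': the hypotheses as stated give only $g\in L^2(\RR)$, so a priori $\hat g$ is an $L^2$ function that could be unbounded on the transition band $1/2<|\omega|<1/2+\epsilon$. In that case multiplication by $\hat g$ is not a bounded operator on $L^2$, and the passage from $L^2$-convergence of the trigonometric partial sums $P_N$ on one period to $L^2(\RR)$-convergence of $\hat g P_N$ does not follow: on the transition band $F$ vanishes while $P_N$ does not, and $\int|\hat g|^2|P_N-F|^2$ there need not tend to zero. You should either record the standing assumption $\hat g\in L^\infty$ --- which every standard formulation makes and which is clearly what the paper intends, its extreme example being the ideal low-pass filter $g(t)=\sin(\pi t)/(\pi t)$ --- or derive it from an added regularity hypothesis on $g$. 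With that single amendment your argument is complete.
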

	As an extreme case, for $g(t)=\sin(\pi t)/(\pi t)$ and $T=1$, the following identity holds in $L^2(\RR)$:
\[
	f(t)=\sum f(n)\frac{\sin(\pi(t-n))}{\pi(t-n)}.
\]
	However, the discrete nature of digital data storage makes it impossible to store exactly the samples $\{f(nT)\}_{n\in\mathbb{Z}}$. Instead, the quantized samples $\{q_n\}_{n\in\mathbb{Z}}$ chosen from a pre-determined finite alphabet $\mathscr{A}$ are stored. This results in the following reconstructed signal
\[
	\tilde{f}(t)=T \sum q_n g(t-nT).
\]

	As for the choice of the quantized samples $\{q_n\}_n$, we shall discuss the following two schemes
\begin{itemize}
\item	Pulse Code Modulation (PCM):\par
	Quantized samples are taken as the direct-roundoff of the current sample, i.e.,
\[
	q_n=Q_0(f(nT)):=\argmin_{q\in\mathscr{A}}|q-f(nT)|.
\]
\item	$\Sigma\Delta$ Quantization:\par
	A sequence of auxiliary variables $\{u_n\}_{n\in\ZZ}$ is introduced for this scheme. $\{q_n\}_{n\in\ZZ}$ is defined recursively as
\[
\begin{split}
	q_n&=Q_0(u_{n-1}+f(nT)),\\
	u_n&=u_{n-1}+f(nT)-q_n.
\end{split}
\]
\end{itemize}

	$\Sigma\Delta$ quantization was introduced in 1963, \cite{HI_YY_1963}, and is still widely used, due to its advantages over PCM. Specifically, $\Sigma\Delta$ quantization is robust against hardware imperfection \cite{ID_VV_2006}, a decisive weakness for PCM. For $\Sigma\Delta$ quantization, and the more general noise shaping schemes to be explained in Section \ref{NS_scheme}, the boundedness of $\{u_n\}_{n\in\ZZ}$ turns out to be essential, as most analyses on quantization problems rely on it for error estimation. Schemes with bounded auxiliary variables are said to be \textit{stable}.\par
	
	Despite its merits over PCM, $\Sigma\Delta$ quantization merely produces linear error decay with respect to bits used as opposed to exponential error decay produced by its counterpart PCM. Thus, it is desirable to generalize $\Sigma\Delta$ quantization for higher order error decay.
	
	Given $r\in\NN$, one can consider an \textit{$r$-th order $\Sigma\Delta$ quantization scheme} as investigated by Daubechies and DeVore:
	
\begin{theorem}[Higher Order $\Sigma\Delta$ Quantization, \cite{ID_RD_2003}]
	Consider the following stable quantization scheme
\[
	f(nT)-q_n=(\Delta^ru)_n:=\sum_{l=0}^r(-1)^l\binom{r}{l}u_{n-l},
\]
	where $\{q_n\}$ and $\{u_n\}$ are the quantized samples and auxiliary variables respectively, and $(\Delta h)_n:=h_n-h_{n-1}$ for any sequence $h$. Then, for all $t\in\RR$,
\[
	|f(t)-T\sum_{n\in\ZZ}q_n g(t-nT)|\leq T^r\|u\|_\infty\|\frac{d^r g}{dt^r}\|_1.
\]
\end{theorem}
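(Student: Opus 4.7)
The plan is to reduce the reconstruction error to an expression involving $u$ and then transfer the finite difference operator from $u$ onto $g$ via discrete summation by parts (Abel summation). By the classical sampling theorem,
\[
f(t) - T\sum_{n\in\ZZ} q_n g(t-nT) = T\sum_{n\in\ZZ}\bigl(f(nT)-q_n\bigr)g(t-nT) = T\sum_{n\in\ZZ}(\Delta^r u)_n\, g(t-nT),
\]
so the task reduces to controlling the right-hand side. First I would re-index the double sum: expanding $(\Delta^r u)_n = \sum_{l=0}^r(-1)^l\binom{r}{l}u_{n-l}$ and swapping the order via $m=n-l$ gives
\[
\sum_{n\in\ZZ}(\Delta^r u)_n\, g(t-nT) = \sum_{m\in\ZZ} u_m \sum_{l=0}^r(-1)^l\binom{r}{l} g\bigl(t-(m+l)T\bigr).
\]

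The second step is to recognize the inner alternating sum as an $r$-th order finite difference of $g$ and rewrite it as an $r$-fold integral of $g^{(r)}$. Applying iteratively the identity $\phi(x)-\phi(x-T)=\int_0^T\phi'(x-s)\,ds$ to $\phi(\cdot)=g(t-m T-\cdot)$ yields the key formula
\[
\sum_{l=0}^r (-1)^l\binom{r}{l} g\bigl(t-(m+l)T\bigr) = \int_0^T\!\!\cdots\!\int_0^T g^{(r)}\bigl(t-mT-s_1-\cdots-s_r\bigr)\,ds_1\cdots ds_r.
\]
I would verify this identity by induction on $r$, paying attention to the sign: each forward difference in $m$ becomes a derivative acting in the opposite direction, and the two sign reversals cancel.

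Combining these two steps, taking absolute values, and pulling out $\|u\|_\infty$ reduces the bound to the estimate
\[
\sum_{m\in\ZZ}\int_0^T\!\!\cdots\!\int_0^T\bigl|g^{(r)}(t-mT-s_1-\cdots-s_r)\bigr|\,ds_1\cdots ds_r \leq T^{r-1}\|g^{(r)}\|_1.
\]
For fixed $s_2,\dots,s_r$, the combined $s_1$-integration and summation over $m$ tile $\RR$ disjointly and reassemble $\|g^{(r)}\|_1$; the remaining $r-1$ integrals over $[0,T]$ contribute the factor $T^{r-1}$. Multiplying by the initial $T$ from the sampling formula yields the claimed $T^r\|u\|_\infty \|g^{(r)}\|_1$ bound.

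The only real subtlety is getting the signs and combinatorics right in the finite-difference-to-integral identity; the convergence of the sums and the interchange of sum and integral are harmless under the standing assumption that $g$ is Schwartz (so $g^{(r)}\in L^1$) and $u$ is bounded (stability). No single step is conceptually hard, but the bookkeeping in the induction of the finite-difference identity is where a careless sign error would propagate.
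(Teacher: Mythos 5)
The paper does not prove this statement; it is quoted verbatim from Daubechies--DeVore \cite{ID_RD_2003}, and your argument is correct and is essentially the standard proof given there: summation by parts to transfer $\Delta^r$ from $u$ onto $g$, the finite-difference-to-iterated-integral identity $\sum_{l=0}^r(-1)^l\binom{r}{l}g(x-lT)=\int_{[0,T]^r}g^{(r)}(x-s_1-\cdots-s_r)\,ds$, and the tiling of $\RR$ by the translates $[x-(m+1)T,x-mT]$ to recover $\|g^{(r)}\|_1$ with the factor $T^{r-1}$. Your sign bookkeeping checks out (the two reversals from $\phi(x)=g(t-mT-x)$ and from the forward-difference convention do cancel), and the summability caveats you flag are exactly the standing decay assumptions on $g$ needed for the sampling expansion itself.
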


\begin{remark}
\label{back_diff_def}
	The backward difference operator $\Delta$ defined above has a counterpart in finite dimensional spaces. In particular, an $m$-dimensional backward difference matrix $\Delta\in\NN^{m\times m}$ is a lower triangular matrix with unit diagonal entries and $-1$ as sub-diagonal ones. All other entries are identically $0$.
\end{remark}

\subsection{Motivation}

	Higher order $\Sigma\Delta$ quantization has been known for a long time \cite{WC_PW_RG_1989, PF_AG_RA_1990}, and the $r$-th order $\Sigma\Delta$ quantization improves the error decay rate from linear to polynomial degree $r$ while preserving the advantages of a first order $\Sigma\Delta$ quantization scheme. 
	
	However, even the $r$-th order polynomial decay pales in the presence of exponential decay, and thus a natural question arises: is it possible to generalize $\Sigma\Delta$ quantization scheme further so that the reconstruction error decay matches the exponential decay of PCM? Two solutions have been proposed for this question. One is to create new quantization schemes, known as noise shaping quantization schemes. A brief summary of its development will be provided in Section \ref{NS_scheme}.
	
	The other one is to drastically enhance data storage efficiency while maintaining the same level of reconstruction accuracy, and signal decimation belongs in this category. The process is as follows: given an r-th order $\Sigma\Delta$ quantization scheme, there exists $\{q_n^T\}, \{u_n\}$ such that\\
\[
	f(nT)-q_n^T=(\Delta^r u)_n,
\]
	where $\|u\|_\infty<\infty$, and $u_n=0$ for $n\leq0$. Then, consider
\[
	\tilde{q}_n^{T_0}:=(S_\rho^r q^T)_{(2\rho+1)n},
\]
	a down-sampled sequence of $S_\rho^r q^T$, where $(S_\rho h)_n :=\frac{1}{2\rho+1}\sum_{m=-\rho}^{\rho}h_{n+m}$. Signal decimation is the process with which one converts the quantized samples $\{q_n^T\}$ to $\{\tilde{q}_n^{T_0}\}$. See Figure \ref{Diagram} for an illustration.

	Decimation has been known in the engineering community \cite{JC_1986}, and it was observed that decimation results in exponential error decay with respect to the bit-rate, even though the observation remained a conjecture until 2015 \cite{ID_RS_2015}, when Daubechies and Saab proved the following theorem:

\begin{theorem}[Signal Decimation for Bandlimited Functions, \cite{ID_RS_2015}]\label{ID_RS_Deci}
	Given $f\in PW_{1/2}$, $T<1$, and $T_0=(2\rho+1) T< 1$, there exists a function $\tilde{g}$ such that\\
\[
	f(t)=T_0\sum[S_\rho^r f^{(T)}]_{(2\rho+1)n}\tilde{g}(t-nT_0),
\]
\begin{equation}
\label{err_dec_DS}
	|f(t)-T_0\sum \tilde{q}_n^{T_0}\tilde{g}(t-nT_0)|\leq C_{\Sigma\Delta}C^{r}\big(\frac{T}{T_0}\big)^r =:\mathcal{D}.
\end{equation}

	Moreover, the bits needed for each Nyquist interval is
\begin{equation}
\label{resource_DS}
	\frac{1}{T_0}\log_{2}((2\rho+1)^r+1)\leq\frac{1}{T_0}\log_{2}\bigg(2\bigg(\frac{T_0}{T}\bigg)^r\bigg)=:\mathcal{R}.
\end{equation}

	Consequently,\\
\[
	\mathcal{D}(\mathcal{R})=2C_{\Sigma\Delta}C^r 2^{-T_0\mathcal{R}}
\]

\end{theorem}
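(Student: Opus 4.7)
The plan is to prove the three constituent pieces of the statement — the sampling formula, the error estimate, and the bit-count — and then combine them to extract the exponential decay $\mathcal D(\mathcal R)=2C_{\Sigma\Delta}C^r 2^{-T_0\mathcal R}$.

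\textbf{Step 1 (Reconstruction identity).} Since $f\in PW_{1/2}\subset PW_{1/(2T_0)}$, I fix a smooth window $\tilde g\in L^2(\RR)$ whose Fourier transform is identically $1$ on $[-1/2,1/2]$ and vanishes outside $[-1/(2T_0)+\eta,1/(2T_0)-\eta]$ for a small $\eta>0$, chosen so that $\|d^r\tilde g/dt^r\|_1\le C^r$ for a constant $C$ depending only on the cutoff. The classical sampling theorem at spacing $T_0$ yields $f(t)=T_0\sum_n f(nT_0)\tilde g(t-nT_0)$. The substitution $f(nT_0)\rightsquigarrow[S_\rho^r f^{(T)}]_{(2\rho+1)n}$ is then legitimate because the symbol $\widehat{S_\rho}(\xi)=\tfrac{1}{2\rho+1}\tfrac{\sin((2\rho+1)\pi\xi T)}{\sin(\pi\xi T)}$ equals $1$ at the origin and any aliasing contamination after downsampling is annihilated by the strict cutoff of $\hat{\tilde g}$.

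\textbf{Step 2 (Error estimate).} Setting $e_n:=[S_\rho^r f^{(T)}]_{(2\rho+1)n}-\tilde q_n^{T_0}$ and using the $r$-th order $\Sigma\Delta$ identity $f^{(T)}-q^T=\Delta^r u$, we get $e_n=[S_\rho^r\Delta^r u]_{(2\rho+1)n}$. Because $S_\rho$ and $\Delta$ are both translation-invariant convolution operators, they commute, so $S_\rho^r\Delta^r=(\Delta S_\rho)^r$, and a telescoping computation shows
\[
(\Delta S_\rho h)_n=\frac{h_{n+\rho}-h_{n-\rho-1}}{2\rho+1},
\]
whence $\|\Delta S_\rho\|_{\ell^\infty\to\ell^\infty}\le 2/(2\rho+1)$ and iteration gives $\|(\Delta S_\rho)^r u\|_\infty\le(2/(2\rho+1))^r\|u\|_\infty$. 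Bounding $T_0\sum_n|\tilde g(t-nT_0)|$ by a Riemann-sum argument (or, more sharply, by a summation-by-parts that produces $\|d^r\tilde g/dt^r\|_1$) then yields $|f(t)-T_0\sum_n\tilde q_n^{T_0}\tilde g(t-nT_0)|\le C_{\Sigma\Delta}C^r(T/T_0)^r$, which is \eqref{err_dec_DS}.

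\textbf{Step 3 (Bit-rate and combination).} Each $q_n^T$ lies in the fixed alphabet $\mathscr A$, and $\tilde q_n^{T_0}$ is a sum of $(2\rho+1)^r$ such symbols divided by $(2\rho+1)^r$; in particular it takes at most $(2\rho+1)^r+1$ distinct values, so each decimated sample needs $\log_2((2\rho+1)^r+1)$ bits. Dividing by the Nyquist interval $T_0$ produces \eqref{resource_DS}. Finally, \eqref{resource_DS} rearranges to $(T/T_0)^r=(2\rho+1)^{-r}\le 2\cdot 2^{-T_0\mathcal R}$, which, substituted into \eqref{err_dec_DS}, gives the claimed exponential relation $\mathcal D(\mathcal R)=2C_{\Sigma\Delta}C^r 2^{-T_0\mathcal R}$.

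\textbf{Anticipated obstacles.} The main technical difficulty is twofold. First, justifying the decimated sampling formula requires careful handling of the symbol of $S_\rho^r$ and the aliasing introduced by downsampling; the strict inequality $T_0<1$ must be exploited through the cutoff margin $\eta$. Second, ensuring that $\|d^r\tilde g/dt^r\|_1$ grows only like $C^r$ uniformly in $r$ severely constrains the window: essentially a Gevrey-type bump whose transition region is tuned to the derivative order $r$ is forced, and extracting clean constants there is the crux of the argument.
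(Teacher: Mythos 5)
The paper itself does not prove this theorem; it is quoted from Daubechies and Saab \cite{ID_RS_2015} as background, so your argument can only be judged on its own terms. Steps 2 and 3 are essentially sound: the telescoping identity $(\Delta S_\rho h)_n=(h_{n+\rho}-h_{n-\rho-1})/(2\rho+1)$ and the resulting bound $\|(\Delta S_\rho)^r u\|_\infty\le(2/(2\rho+1))^r\|u\|_\infty$ do produce the $(T/T_0)^r$ factor, and counting the $(2\rho+1)^r+1$ possible values of a decimated sample, then rearranging $2^{-T_0\mathcal{R}}=\tfrac12(T/T_0)^r$, is exactly how the exponential trade-off is extracted.

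The genuine gap is in Step 1. You take $\hat{\tilde g}\equiv 1$ on $[-1/2,1/2]$ and claim the substitution $f(nT_0)\rightsquigarrow[S_\rho^r f^{(T)}]_{(2\rho+1)n}$ is legitimate because the symbol of $S_\rho$ equals $1$ at the origin and aliases are cut off. But the symbol $D_\rho(\xi)=\frac{1}{2\rho+1}\frac{\sin((2\rho+1)\pi T\xi)}{\sin(\pi T\xi)}$ equals $1$ \emph{only} at $\xi=0$; elsewhere on the band it is strictly smaller, so $[S_\rho^r f^{(T)}]_{(2\rho+1)n}$ are the $T_0$-spaced samples of the distorted function $h$ with $\hat h=D_\rho^r\hat f\neq\hat f$. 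With your window the series reconstructs $h$, not $f$, the sampling identity fails, and the discrepancy between $f$ and $h$ does not a priori decay at the claimed rate. This is precisely why the theorem asserts ``there exists a function $\tilde g$'': the reconstruction filter must carry $\hat{\tilde g}(\xi)=D_\rho(\xi)^{-r}$ on $[-1/2,1/2]$ (well defined and bounded there, since $|T_0\xi|\le T_0/2<1/2$ keeps $D_\rho$ bounded away from zero), smoothly cut off before the nearest alias band at distance $1/T_0-1/2>1/2$. Once $\tilde g$ is corrected in this way your Step 2 goes through, but you must then verify that $T_0\sum_n|\tilde g(t-nT_0)|$ (or the relevant norm of $\tilde g$) grows at most geometrically in $r$ despite the reciprocal Dirichlet factor --- that, and not a Gevrey bump equal to $1$ on the band, is the actual source of the constant $C^r$.
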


	From \eqref{err_dec_DS} and \eqref{resource_DS}, we can see that the reconstruction error after decimation still decays polynomially with respect to the sampling frequency. As for the data storage, the bits needed changes from $O(T^{-1})$ to $O(\log(1/T))$. Thus, the reconstruction error decays exponentially with respect to the bits used.

\begin{figure}
\centering
\includegraphics[scale=1]{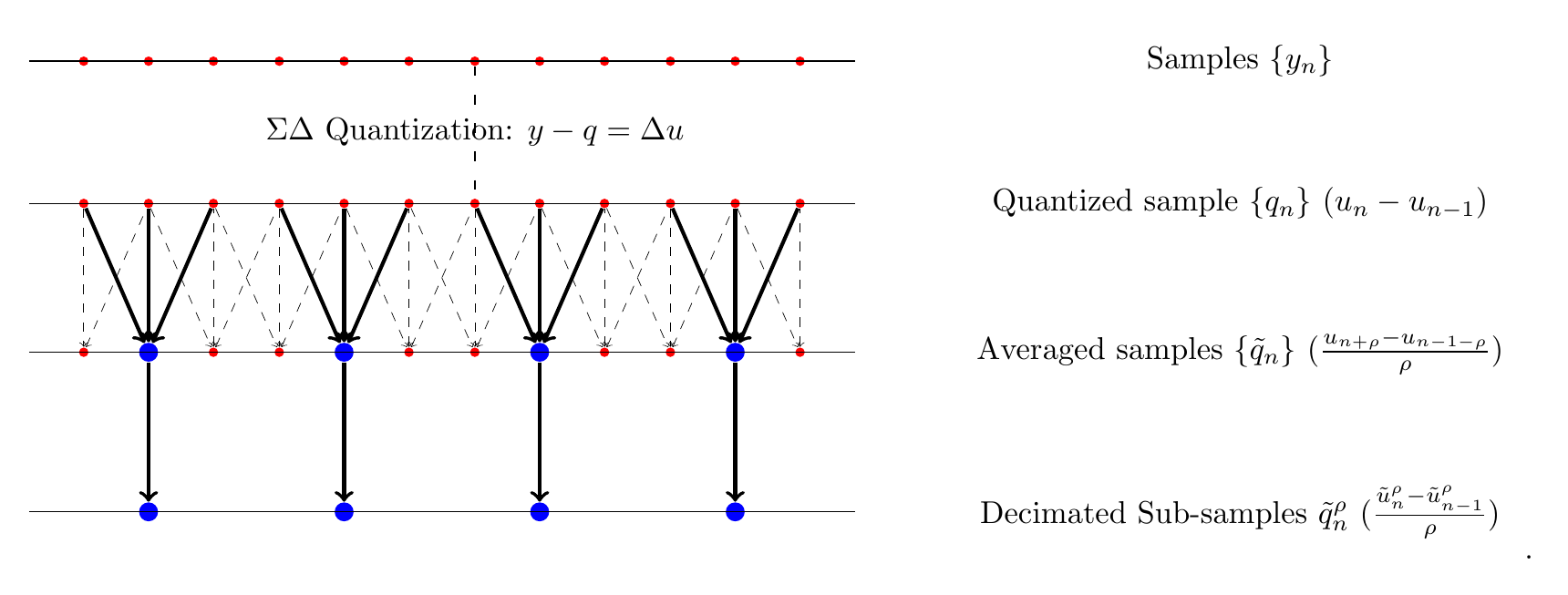}
\caption{Illustration of the first order decimation scheme for A/D conversion. After obtaining the quantized samples $\{q_n\}_n$ in the first step, decimation takes the average of quantized samples within disjoint blocks in the second step. The outputs are used as the decimated sub-samples $\{\tilde{q}_n^\rho\}$ in the third step. The effect on the reconstruction (replacing $q_n$ with $y_n-q_n$) is illustrated in parentheses.}
\label{Diagram}
\end{figure}
	
	Our motivation is the result from Theorem \ref{ID_RS_Deci}. As this theorem is only applicable for A/D conversion, we are interested in extending decimation to finite frames. In particular, we would like to obtain polynomial error decay rate with respect to the oversampling ratio $\rho$ while compressing the data to the order of $O(\log(1/\rho))$. 
	In \cite{KL_2018}, the author made an extension of decimation to signals in finite dimensional spaces. Such signals are sampled by finite frames, and a brief introduction on finite frames is given in Section \ref{prelim}. Using the alternative decimation operator introduced in the same paper, it is proven that up to the second order sigma-delta quantization, similar results to Theorem \ref{ID_RS_Deci} can be achieved. The precise statement will be given in Theorem \ref{KC_Alt}.

		
\subsection{Results and Outline}\label{outline}

	In this paper, we build on our past result in Theorem \ref{KC_Alt} to formulate and prove Theorem \ref{main_uni}. Specifically, Theorem \ref{KC_Alt} is an extension of signal decimation to finite frames up to the second order $\Sigma\Delta$ quantization in \cite{KL_2018}, and Theorem \ref{main_uni} further generalizes Theorem \ref{KC_Alt} to arbitrary orders. We shall show that for any stable $r$-th order $\Sigma\Delta$ quantization, the adapted decimation to be introduced in Section \ref{main_contri} coupled with the quantization scheme yields reconstruction error decay rate of polynomial degree $r$ with respect to the oversampling ratio. Furthermore, thanks to the efficient data storage enabled by adapted decimation, the error decay rate is exponential with respect to the total number of bits used.
	
	To provide necessary background information, we include preliminaries for signal quantization theory on finite frames in Section \ref{prelim}. We first define $\Sigma\Delta$ quantization on finite frames in Section \ref{SD_quant_frame}. Then, we give a formal definition of noise shaping schemes, which is more general than $\Sigma\Delta$ quantization, in Section \ref{NS_scheme}. We define the notion of unitarily generated frames in Section \ref{UGF_def}, which is the class of frames we consider in this paper. Section \ref{prior_work} is devoted to perspective and prior works, and our notation is defined in Section \ref{notation}.
	
	In Section \ref{main_contri}, we first define alternative decimation and state the result of it in Theorem \ref{KC_Alt}, which is proven in \cite{KL_2018}. Then, we define adapted decimation and state our main results in Theorem \ref{main_uni}. We prove Theorem \ref{main_uni} in Section \ref{proof}, and the strategy of our proof is explained in Section \ref{roadmap}.

\section{Preliminaries on Finite Frame Quantization}\label{prelim}
	

	Signal quantization theory on finite frames is well motivated from the need to deal with data corruption or erasure \cite{VG_JK_MV_1999, VG_JK_JK_2001}. The authors considered the PCM quantization scheme described above and modeled the quantization error as random noise. In \cite{JB_AP_OY_2006}, deterministic analysis on $\Sigma\Delta$ quantization for finite frames showed that a linear error decay rate is obtained with respect to the oversampling ratio. Moreover, if the frame satisfies certain smoothness conditions, the decay rate can be super-linear for first order $\Sigma\Delta$ quantization. Noise shaping schemes for finite frames have also been investigated, some of which yield exponential error decay rate \cite{CSG_2015,EC_CSG, CSG_2016}. In this section, we shall provide necessary information on quantization for finite frames before stating our results in Section \ref{main_contri}.

\subsection{$\Sigma\Delta$ Quantization on Finite Frames}\label{SD_quant_frame}
\subsubsection{Overview on Frame Theory}
Fix a separable Hilbert space $\Hcal$ along with a set of vectors $T=\{e_j\}_{j\in\ZZ}\subset\Hcal$. The collection of vectors $T$ forms a frame for $\Hcal$ if there exist $A,B>0$ such that for any $v\in\Hcal$, the following inequality holds:
\[
	A\|v\|_{\Hcal}^2\leq \sum_{j\in\ZZ}|{<}v,e_j{>}|^2\leq B\|v\|_{\Hcal}^2.
\]

	The concept of frames is a generalization of orthonormal bases in a vector space. Different from bases, frames are usually over-complete: the vectors form a linearly dependent spanning set. Over-completeness of frames is particularly useful for noise reduction, and consequently frames are more robust against data corruption than orthonormal bases.\par
	
	Let us restrict ourselves to the case when $\Hcal=\CC^k$ is a finite dimensional Euclidean space, and the frame consists of a finite number of vectors. Given a finite frame $T=\{e_j\}_{j=1}^m$, the linear operator $E:\CC^k\to\CC^m$ satisfying $Ev=\{{<}v,e_j{>}\}_{j=1}^m$ is called the \textit{analysis operator}. Its adjoint operator $E^\ast:\CC^m\to\CC^k$ satisfies $E^\ast c=\sum_{j=1}^mc_je_j$ and is called the \textit{synthesis operator}. The \textit{frame operator} $\Scal$ is defined by $\Scal=E^\ast E: \CC^k\to\CC^k$.

\begin{remark}\label{frame_bound_remark}
	Note that since $\Scal$ is Hermitian, 
\[
\|\Scal\|_2=\max_{v: \|v\|_2=1}|v^T\Scal v|=\max_{v:\|v\|_2=1}\sum_{j=1}^m|{<}v,e_j{>}|^2\leq B. 
\]
	Similarly, $\|\Scal^{-1}\|_2\leq A^{-1}$. In particular, the 2-norm of $\Scal$ is directly tied to the lower frame bound of $T$.
\end{remark}

	Under this framework, one considers the quantized samples $q$ of $y=Ex$ and reconstructs $\tilde{x}=\Scal^{-1}E^\ast q$, where $\Scal=E^\ast E$. 
	
\subsubsection{$\Sigma\Delta$ Quantization and Mid-Rise Uniform Quantizers}
	The frame-theoretic $r$-th order greedy $\Sigma\Delta$ quantization is defined as follows: given a finite alphabet $\mathscr{A}\subset\mathbb{C}$ and $r\in\NN$, we calculate $q, u\in\CC^m$ as follows:
\begin{equation}
\label{recur}
		y-q=\Delta^r u,
\end{equation}
	where $\Delta\in\mathbb{Z}^{m\times m}$ is the backward difference matrix. The quantization scheme is said to be \textit{stable} if $\|u\|_\infty$ is uniformly bounded for all $m$. For the rest of the paper, we shall assume that such stable schemes exist.
	
	In practice, the quantization alphabet $\mathscr{A}$ is often chosen to be $\mathscr{A}_0$ which is uniformly spaced and symmetric around the origin: given $\delta>0$, we define a mid-rise uniform quantizer $\mathscr{A}_0$ of length $2L$ to be $\mathscr{A}_0=\{(2j+1)\delta/2:\, -L\leq j\leq L-1\}$.

	For complex Euclidean spaces, we define $\mathscr{A}=\mathscr{A}_0+\imath\mathscr{A}_0$. In both cases, $\mathscr{A}$ is called a mid-rise uniform quantizer. Throughout this paper we shall always be using $\mathscr{A}$ as our quantization alphabet.\par

	
\subsection{Noise Shaping Schemes and the Choice of Dual Frames}\label{NS_scheme}

	$\Sigma\Delta$ quantization is a subclass of the more general noise shaping quantization, where the quantization scheme is designed such that the reconstruction error is easily separated from the true signal in the frequency domain. For instance, it is pointed out in \cite{CSG_2016} that the reconstruction error of $\Sigma\Delta$ quantization for bandlimited functions is concentrated in high frequency ranges. Since audio signals have finite bandwidth, it is then possible to separate the signal from the error using low-pass filters. 
	
	Noise shaping quantization has been well established for A/D conversion since the mid 20th century \cite{ST_RH_1978}, and in terms of finite frames, noise shaping schemes generalize the $\Sigma\Delta$ scheme in the following way:
\[
	y-q=Hu,
\]
	where $y, q$, and $u$ are the samples, quantized samples, and the auxiliary variable, respectively, while the transfer matrix $H$ is lower-triangular. Now, given an analysis operator $E$, a transfer matrix $H$, and a dual $F$ to $E$, i.e.\ , $FE=I_k$, the reconstruction error in this setting is 
\[
\|x-Fq\|_2=\|F(Ex-q)\|_2=\|FHu\|_2\leq \|FH\|_{\infty,2}\|u\|_\infty, 
\]
	where $\|\cdot\|_{\infty,2}$ is the operator norm between $\ell^\infty$ and $\ell^2$, i.e.,
\[
	\|T\|_{\infty,2}:=\sup_{\|x\|_\infty=1}\|Tx\|_2.
\]

	The choice of the dual frame $F$ plays a role in the reconstruction error. For instance, \cite{JB_ML_AP_OY_2010} proved that $\argmin_{FE=I_k}\|FH\|_2=(H^{-1}E)^\dagger H^{-1}$, where given any matrix $A$, $A^\dagger$ is defined as the canonical dual $(A^\ast A)^{-1}A^\ast$. More generally, one can consider a $V$-dual, namely $(VE)^\dagger V$, provided that $VE$ is still a frame. With this terminology, decimation can be viewed as a special case of $V$-duals, and conversely every $V$-dual can be associated with corresponding post-processing on the quantized sample $q$.
	
\subsection{Unitarily Generated Frames}\label{UGF_def}
	
	In this paper, we are interested in a specific class of frames called the unitarily generated frames (UGF). A unitarily generated frame $T_u$ is generated by a cyclic group: given a unit base vector $\phi_0\in\CC^k$ and a Hermitian matrix $\Omega\in\CC^{k\times k}$, the frame elements of $T_u$ are defined as
\[
	\phi_j^{(m)}=U_{j/m}\phi_0,\quad   U_t:=e^{2\pi\imath\Omega t}.
\]
	The analysis operator $\Phi$ of $T_u$ has $\{\phi_j^\ast\}_j$ as its rows.

	As symmetry occurs naturally in many applications, it is not surprising that unitarily generated frames receive serious attention, and their applications in signal processing abound, \cite{GF_1991, HB_YE_2003, EC_CSG, CSG_2016}. \par
	
	One particular application comes from dynamical sampling, which records the spatiotemporal samples of a signal in interest. Mathematically speaking, one tries to recover a signal $f$ on a domain $D$ from the samples $\{f(X),f_{t_1}(X),\dots, f_{t_N}(X)\}$ where $X\subset D$, and $f_{t_j}=A^{t_j}f$ denotes the evolved signal. Equivalently, one recovers $f$ from $\{{<}A^{t_j}f,e_i{>}\}_{i,j}=\{{<}f,(A^{t_j})^{\ast}e_i{>}\}_{i,j}$, which aligns with the frame reconstruction problems, \cite{AA_JD_IK_2013, AA_JD_IK_2015}. In particular, Lu and Vetterli \cite{YL_MV_2009, YL_MV_2009_2} investigated the reconstruction from spatiotemporal samples for a diffusion process. They noted that one can compensate under-sampled spatial information with sufficiently over-sampled temporal data. Unitarily generated frames represent the cases when the evolution process is unitary and the spatial information is one-dimensional. 
	
	It should be noted that unitarily generated frames are group frames with the generator $G=U_{1/m}$ provided that $U_1=G^m=I_k$, while harmonic frames are tight unitarily generated frames. Here, a frame $T=\{e_j\}_j\subset\Hcal$ is tight if for all $v\in \Hcal$, there exists a constant $A>0$ such that $\sum_{j}|{<}v,e_j{>}|^2=A\|v\|^2$.
	
	 A common class of harmonic frames is the exponential frame with generator $\Omega$ as a diagonal matrix with integer entries and the base vector $\phi_0=(1,\dots,1)^t/\sqrt{k}$.\par


	
\subsection{Perspective and Prior Works}\label{prior_work}

\subsubsection{Quantization for Bandlimited Functions}
	Despite its simple form and robustness, $\Sigma\Delta$ quantization only results in linear error decay with respect to the sampling period $T$ as $T\to0$. It was later proven in \cite{ID_RD_2003} that a generalization of $\Sigma\Delta$ quantization, namely the r-th order $\Sigma\Delta$ quantization, exists for any arbitrary $r\in\NN$, and for such schemes the error decay is of polynomial order $r$. Leveraging the different constants for this family of quantization schemes, sub-exponential decay can also be achieved. A different family of quantization schemes was shown \cite{CSG_2003} to have exponential error decay with small exponent ($c\approx0.07$.) In \cite{PD_CG_FK_2011}, the exponent was improved to $c\approx 0.102$.\par

\subsubsection{Finite Frames}
	$\Sigma\Delta$ quantization can also be applied to finite frames. It is proven \cite{JB_AP_OY_2006} that for any family of frames with bounded frame variation, the reconstruction error decays linearly with respect to the oversampling ratio $m/k$, where the frame is an $m\times k$ matrix. With different choices of dual frames, \cite{JB_ML_AP_OY_2010} proposed that the so-called Sobolev dual achieves minimum induced matrix 2-norm for reconstructions. The limit of $\Sigma\Delta$ quantization for arbitrary frames is detailed in \cite{BB_VP_SA_2007}. Using smooth frame-path with vanishing derivatives at the endpoint yields polynomial error decay rate for higher order $\Sigma\Delta$ quantization. By carefully matching between the dual frame and the quantization scheme, \cite{CSG_2016} proved that using $\beta$-dual for random frames will result in exponential decay with near-optimal exponent and high probability.\par
	
\subsubsection{Decimation}
	In \cite{JC_1986}, using the assumption that the noise in $\Sigma\Delta$ quantization is random along with numerical experiments, it was asserted that decimation greatly reduces the number of bits needed while maintaining the reconstruction accuracy. In \cite{ID_RS_2015}, a rigorous proof was given to show that such an assertion is indeed valid, and the reduction of bits used turns the linear decay into exponential decay with respect to the bit-rate.
	
	Adapting decimation to finite frames is by no means a new idea. Iwen and Saab \cite{MI_RS_2013} used probabilistic arguments and the property of efficient storage to construct random quantization schemes with exponential error decay rate with respect to the bit usage. In \cite{TH_RS_2018}, similar ideas are used on $\Sigma\Delta$. Moreover, the connection between decimation and distributed noise shaping can be seen in it.
	
	\cite{MI_RS_2013, TH_RS_2018} both use probabilistic arguments that only ensure success with some probability instead of deterministic guarantee. For the explicit and deterministic adaptation to finite dimensional signals, the author proved in \cite{KL_2018} that there exists a similar operator called the alternative decimation operator that behaves similarly to the decimation for bandlimited signals. In particular, for the first and second order of $\Sigma\Delta$ quantization, it is possible to achieve exponential reconstruction error decay with respect to the bit-rate as well. However, similar to the caveat of decimation, it merely improves the storage efficiency while maintaining the same level reconstruction error. Thus, the error rate with respect to the oversampling ratio remains the same (quadratic for the second order,) which is still inferior to other noise shaping schemes.
	
\subsubsection{Beta Dual of Distributed Noise Shaping}\label{beta_dual}
	Chou and G{\" u}nturk \cite{CSG_2016, EC_CSG} proposed a distributed noise shaping quantization scheme with beta duals as an example. The definition of a beta dual is as follows:
\begin{definition}[Beta Dual]
	Let $E\in\RR^{m\times k}$ be an analysis operator and $k\mid m$. Recall that $F_V\in\RR^{k\times m}$ is a V-dual of $E$ if\\
\[
	F_V=(VE)^{\dagger}V,
\]
	where $V\in\RR^{p\times m}$ such that $VE$ is still a frame.
\end{definition}

	Given $\beta>1$, the $\beta$-dual $F_V=(VE)^{\dagger}V$ has  $V=V_{\beta,m}$, a $k$-by-$m$ block matrix such that each block is $v=[\beta^{-1},\beta^{-2},\dots,\beta^{-m/k}]\in\RR^{1\times m/k}$.
	
	In this case, the transfer matrix $H$ is an $m$-by-$m$ block matrix where each block $h$ is an $m/k$-by-$m/k$ matrix with unit diagonal entries and $-\beta$ as sub-diagonal entries. Under this setting, it is proven that the reconstruction error decays exponentially.
	
	One may notice the similarity between the beta dual and decimation. Indeed, if one chooses $\beta=1$ and normalizes $V$ by $\frac{k}{m}$, the same result as decimation can be obtained, achieving linear error decay with respect to the oversampling ratio and exponential decay with respect to the bit usage. Nonetheless, its generalization to higher order error decay with respect to the oversampling ratio is lacking, whereas the adapted decimation we propose can be extended to arbitrary polynomial degrees.

\subsection{Notation}\label{notation}
	The following notation is used in this paper:
\begin{itemize}
\item	$x\in\CC^k$: the signal of interest.
\item	$\Omega\in\CC^{k\times k}$: a Hermitian matrix with eigenvalues $\{\lambda_j\}_{j=1}^k\subset\RR$ and corresponding orthonormal eigenvectors $\{v_j\}_{j=1}^k$.
\item	$\Phi\in\CC^{m\times k}$: the analysis operator of the unitarily generated frame (UGF) with the generator $\Omega$ and the base vector $\phi_0\in\CC^k$.
\item	$U_t\in\CC^{k\times k}$: the unitary matrix defined as $U_t=e^{2\pi\imath\Omega t}$ for any $t\in\RR$.
\item	$B=B_\Phi\in\CC^{k\times k}$: a unitary matrix that simultaneously diagonalizes $U_t$ and $\Omega$. In particular, $\Omega=B\Lambda B^\ast$ and $U_t=B e^{2\pi\imath\Lambda t}B^\ast$, where $\Lambda=diag(\lambda_1,\dots, \lambda_k)$.
\item $y=\Phi x\in\CC^m$: the sample.
\item	$q\in\CC^m$: the quantized sample obtained from the greedy $\Sigma\Delta$ quantization defined in \eqref{recur}.
\item	$u\in\CC^m$: the auxiliary variable of $\Sigma\Delta$ quantization.
\item	$\rho\in\NN$: the block size of the decimation.
\item	$\eta= m/\rho\in\NN$: the dimension of compressed data.
\item $\mathscr{A}=\mathscr{A}_0+\imath\mathscr{A}_0\subset\CC$: the quantization alphabet. $\mathscr{A}$ is said to have length $2L$ with gap $\delta$ if $\mathscr{A}_0=\{(2j+1)\delta/2:\, -L\leq j\leq L-1\}$ for some $\delta>0$.
\item $F\in\CC^{k\times m}$: a dual to the analysis operator $\Phi$, i.e.\ $F\Phi=I_k$.
\item	$\mathscr{E}$: the reconstruction error $\mathscr{E}=\|x-Fq\|_2$.
\item	$\mathscr{R}$: total number of bits used to record the quantized sample.
\item	$\|\cdot\|_{p,q}$: the $p$-to-$q$ norm. For any matrix $M$, $\|M\|_{p,q}:=\sup_{v: \|v\|_p=1}\|Mv\|_q$. For simplicity, we denote $\|\cdot\|_2:=\|\cdot\|_{2\to 2}$ for matrices.
\item	$\delta:\ZZ\to\{0,1\}$: the Kronecker delta. $\delta(k)=1$ if $k=0$, and $0$ otherwise. With some abuse of notation, we may also view $\delta$ as a function on the cyclic group $\ZZ/\ell\ZZ$ for any $\ell\in\NN$.
\end{itemize}

\section{Contributions}\label{main_contri}

	In Theorem \ref{ID_RS_Deci}, we see that signal decimation coupled with the $r$-th $\Sigma\Delta$ quantization scheme yields polynomial error decay rate of degree $r$ with respect to the oversampling ratio. Moreover, it yields exponential error decay rate the bit-rate. The question we seek to address is whether it is possible to translate decimation from A/D conversion to finite frame quantization. This adaptation proves to be non-trivial, as the $r$-th order $\Sigma\Delta$ quantization does not yield much more than linear error decay rate for finite frames in general as opposed to polynomial degree $r$, \cite{JB_AP_OY_2006, KL_2018}. 
	
	With the introduction of \textit{alternative decimation}, the author was able to adapt signal decimation to finite frames up to the second order $\Sigma\Delta$ quantization \cite{KL_2018}, yielding quadratic error decay rate with respect to the oversampling ratio. This paper further generalizes the concept of decimation and extends the decimation on finite frames to arbitrary polynomial degrees. 
	
	For the sake of completeness, we briefly formulate alternative decimation and the corresponding results below:

\subsection{Past Result: Alternative Decimation}\label{past_result}
\begin{definition}[Alternative Decimation]
	Given fixed $m,\rho\in\NN$, the $(r,m,\rho)$-alternative decimation operator is defined to be $D_\rho S_\rho^r$, where
\begin{itemize}
\item	$ S_\rho=S_\rho^+-S_\rho^-\in\RR^{m\times m}$ is the integration operator satisfying
\[
\begin{split}
		&(S_\rho^+)_{l,j}=\left\{\begin{array}{ll}
	\frac{1}{\rho}&\text{if}\quad l\geq\rho,\, l-(\rho-1)\leq j\leq l\\
	0&\text{otherwise},
	\end{array}\right.\\
	&(S_\rho^-)_{l,j}=\left\{\begin{array}{ll}
	\frac{1}{\rho}&\text{if}\quad l\leq\rho-1,\, l+1\leq j\leq m-\rho+l \\
	0&\text{otherwise.}
	\end{array}\right.
\end{split}
\]

	Here, the cyclic convention is adopted: For any $s\in\ZZ$, $s\equiv s+m$.
\item	$D_\rho\in\NN^{p\times m}$ is the sub-sampling operator satisfying
\[
	(D_\rho)_{l,j}=\left\{\begin{array}{ll}
		1&\text{if}\quad j=\rho\cdot l\\
		0&\text{otherwise},
	\end{array}\right.
\]
	and $\eta=\lfloor m/\rho\rfloor$.
\end{itemize}
\end{definition}

\begin{theorem}[Alternative Decimation for Finite Frames, \cite{KL_2018}]
\label{KC_Alt}
	Given $\Omega$, $\phi_0$, $\{\lambda_j\}_j$, $\{v_j\}_j$, and $\Phi=\Phi_{m,k}$ as the generator, base vector, eigenvalues, eigenvectors, and the corresponding UGF, respectively, and \textbf{r=1,2}. Suppose
\begin{itemize}
\item	$\{\lambda_j\}_{j=1}^k\subset[-\eta/2,\eta/2]\cap\ZZ\backslash\{0\}$,
\item $\min_s |{<}\phi_0,v_s{>}|^2>0$, and
\item	$\rho\mid m$,
\end{itemize}
	
	then the dual frame $F=(D_\rho S_\rho^r \Phi_{m,k})^\dagger D_\rho S_\rho^r$ combined with the $r$-th order $\Sigma\Delta$ quantization has polynomial reconstruction error decay rate of degree $r$ with respect to the oversampling ratio $\rho$:
\[
	\mathscr{E}_{m,\rho,r}\leq C\|u\|_\infty\frac{1}{\rho^r}.
\]
	
	Moreover, the total bits used to record the quantized samples are $\mathscr{R}=O(\log(m))$ bits, where the constant depends on $r$. Suppose $m/\rho=\eta$ is fixed as $m\to\infty$, then as a function of bits used at each entry, $\mathscr{E}_{m,\rho}$ satisfies
\[
	\mathscr{E}(\mathscr{R})\leq C\|u\|_\infty 2^{-\frac{1}{2\eta}\mathscr{R}}.
\]
	
	The constant $C$ is independent of the oversampling ratio $\rho$.

\end{theorem}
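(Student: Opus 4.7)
The plan is to decompose the reconstruction error via $F\Phi=I_k$, which holds once $D_\rho S_\rho^r \Phi$ has full column rank, giving
\[
	\mathscr{E}=\|(D_\rho S_\rho^r \Phi)^\dagger D_\rho S_\rho^r \Delta^r u\|_2 \leq \|(D_\rho S_\rho^r \Phi)^\dagger\|_2 \cdot \|D_\rho S_\rho^r \Delta^r u\|_2,
\]
and to bound the two factors separately. Throughout, the unitarily generated structure is exploited by diagonalizing $\Phi=MD_\phi B^\ast$ in the eigenbasis of $\Omega$, with $M_{j,s}=e^{-2\pi\imath\lambda_s j/m}$ and $D_\phi=\mathrm{diag}(\overline{\inner{\phi_0}{v_s}})$; the hypothesis $\lambda_s\in[-\eta/2,\eta/2]\cap\ZZ\setminus\{0\}$ forces every column of $M$ to be a nonzero-frequency exponential summing to zero.

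For the first factor, the key observation is that the boundary piece $S_\rho^-$ is exactly what is needed so that $S_\rho$ acts on mean-zero exponentials as the full cyclic moving average; hence $S_\rho$ acts on column $s$ of $M$ as multiplication by $\hat S_\rho(\lambda_s/m)=\frac{1}{\rho}(1-e^{-2\pi\imath\lambda_s/\eta})/(1-e^{-2\pi\imath\lambda_s/m})$. After sub-sampling, $(D_\rho M)^\ast(D_\rho M)=\eta I_k$ by distinctness of $\lambda_s\bmod\eta$, so the singular values of $D_\rho S_\rho^r \Phi$ are $\sqrt\eta\,|\hat S_\rho(\lambda_s/m)|^r|\inner{\phi_0}{v_s}|$; the elementary estimate $|\hat S_\rho(\lambda_s/m)|\geq(2/\pi)|\sin(\pi\lambda_s/\eta)|$, which is uniform in $\rho$, delivers $\|(D_\rho S_\rho^r \Phi)^\dagger\|_2\leq C_1/\sqrt\eta$. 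For the second factor, the cornerstone is the telescoping identity $S_\rho\Delta=\rho^{-1}(I-C_\rho)$, where $C_\rho$ denotes cyclic shift by $\rho$; it holds on all of $\CC^m$ even though $\Delta$ is not circulant. For $r=1$ this yields $\|D_\rho S_\rho\Delta u\|_2\leq(2\sqrt\eta/\rho)\|u\|_\infty$ at once. For $r=2$ I factor $S_\rho^2\Delta^2=\rho^{-1}S_\rho(I-C_\rho)\Delta$ and swap $C_\rho\Delta$ with $\Delta C_\rho$; a direct computation shows the commutator $K:=C_\rho\Delta-\Delta C_\rho$ has nonzero entries only at matrix positions $(0,m-\rho-1)$ and $(\rho,m-1)$, giving
\[
	S_\rho^2\Delta^2 u=\frac{1}{\rho^2}(I-C_\rho)^2 u-\frac{1}{\rho}S_\rho K u.
\]
The main term contributes at most $4\sqrt\eta\|u\|_\infty/\rho^2$ after sub-sampling; $S_\rho K u$ has nonzero entries on $O(\rho)$ coordinates of magnitude $\|u\|_\infty/\rho$, but $D_\rho$ retains only the two at indices $0$ and $\rho$, contributing at most $\sqrt 2\|u\|_\infty/\rho^2$. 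Combining yields $\|D_\rho S_\rho^2\Delta^2 u\|_2\leq C_2\sqrt\eta\|u\|_\infty/\rho^2$, and with the frame bound one obtains $\mathscr{E}\leq C\|u\|_\infty/\rho^r$.

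The bit-rate claim follows because only the $\eta$ coordinates of $D_\rho S_\rho^r q$ need to be stored, each complex entry lying in a set of $O(\rho^{2r} L^2)$ values, so $\mathscr{R}=O(r\eta\log\rho)$; for fixed $\eta$ this inverts to $\rho^r\sim 2^{\mathscr{R}/(2\eta)}$, producing the claimed exponential decay. The chief technical obstacle is the $r=2$ step: the non-commutativity of $\Delta$ and $C_\rho$ generates the boundary commutator $K$, and only after sub-sampling does $D_\rho$ fortuitously retain just the two entries of $S_\rho K u$ needed to preserve the $O(\rho^{-2})$ rate. For $r\geq 3$ the iterated commutators proliferate and can no longer be absorbed at rate $O(\rho^{-r})$—precisely the limitation that the adapted decimation developed later in this paper is designed to overcome.
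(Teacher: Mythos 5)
First, a point of reference: the paper does not prove Theorem \ref{KC_Alt} here --- it is imported from \cite{KL_2018} --- so the honest comparison is with that argument and with the closely parallel machinery of Section \ref{proof}. Your architecture matches it in all essential respects: the splitting $\mathscr{E}\leq\|(D_\rho S_\rho^r\Phi)^\dagger\|_2\,\|D_\rho S_\rho^r\Delta^r u\|_2$; the observation that $S_\rho^-$ is exactly the correction making $S_\rho$ act as the full cyclic moving average on the mean-zero exponential columns (this is where $\lambda_s\in\ZZ\setminus\{0\}$ enters); the resulting singular-value formula for $D_\rho S_\rho^r\Phi$; and the estimate $|\sin(\pi\lambda_s/\eta)|/(\rho|\sin(\pi\lambda_s/m)|)\geq 2/\pi$, which is literally Lemma \ref{DC_LB}. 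Where you genuinely differ is the second factor: the paper's route (for the adapted operator) uses Lemma \ref{twist_scale} to convert $D_\rho\bar{\Delta}_\rho$ into $\Delta^{(\eta)}D_\rho$ and then bounds $\|(A_r\Phi)^\ast\Delta^r\|_{\infty,2}$, whereas you manipulate $D_\rho S_\rho^r\Delta^r u$ directly through commutators of $\Delta$ with the shift. Your diagnosis --- that the commutator of the shift with $\Delta$ is the obstruction, survivable at $r=2$ only because $D_\rho$ discards most of the commutator's contribution, and fatal for $r\geq3$ --- is precisely the content of Remark \ref{Alt_vs_Ada} and is the correct reason the present paper replaces $D_\rho(\bar{\Delta}_\rho\Delta^{-1})^r$ by $D_\rho\bar{\Delta}_\rho^r\Delta^{-r}$.

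Two caveats, one of which you must repair. The identity $S_\rho\Delta=\rho^{-1}(I-C_\rho)$ does \emph{not} hold on all of $\CC^m$: at row $l=\rho$ the left-hand side telescopes to $\rho^{-1}v_\rho$, while the right-hand side gives $\rho^{-1}(v_\rho-v_m)$ (rows $l<\rho$ and $l>\rho$ do agree, as you can check against $S_\rho^-$ and $S_\rho^+$ respectively). Thus $S_\rho\Delta=\rho^{-1}(I-C_\rho)+\rho^{-1}e_\rho e_m^\ast$, and since row $\rho$ is retained by $D_\rho$, this rank-one correction must be carried through your $r=2$ computation; it contributes only $O(\|u\|_\infty/\rho^2)$ after the outer $S_\rho$ and the sub-sampling, so the stated bounds survive, but your displayed formula for $S_\rho^2\Delta^2u$ is not exact as written and the specific support you claim for $K$ should be re-derived with this extra term included. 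Second, $(D_\rho M)^\ast(D_\rho M)=\eta I_k$ requires the $\lambda_s$ to be pairwise distinct modulo $\eta$; under the hypothesis $\lambda_s\in[-\eta/2,\eta/2]\cap\ZZ$ this can fail only for the pair $\{-\eta/2,\eta/2\}$, and the paper's Lemma \ref{Phi_LB} accordingly asserts only a lower frame bound $\frac{m}{n}C_{\phi_0}$ rather than exact tightness --- which is all you actually use, so you should state the inequality rather than the equality. The bit-rate count is correct in order and matches Proposition \ref{data_storage} in spirit.
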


\subsection{Main Result: Adapted Decimation}	
	We have seen in Theorem \ref{KC_Alt} that alternative decimation is only useful up to the second order. Thus, we aim to extend our results to arbitrary orders, and the solution we present here is called \textit{adapted decimation}.

\begin{definition}[Adapted Decimation]
  Given $r,m,\rho\in\NN$, the $(r,m,\rho)$-adapted decimation operator is defined to be
  \[
  A_r=\frac{1}{\rho^r}D_\rho \bar{\Delta}^r_\rho\Delta^{-r},
  \]
  where $\Delta\in\NN^{m\times m}$ is the usual backward difference matrix, $\bar{\Delta}_\rho\in\RR^{m\times m}$ satisfies $(\bar{\Delta}_\rho)_{l,s}=\frac{1}{\rho}(\delta(l-s)-\delta(l+\rho-s)+\delta(s-m)\delta(l-\rho))$, and $D_\rho\in\NN^{m/\rho\times m}$ has $(D_\rho)_{l,s}=\delta(s-l\rho)$.

\end{definition}

\begin{figure}
\centering
\includegraphics[scale=0.9]{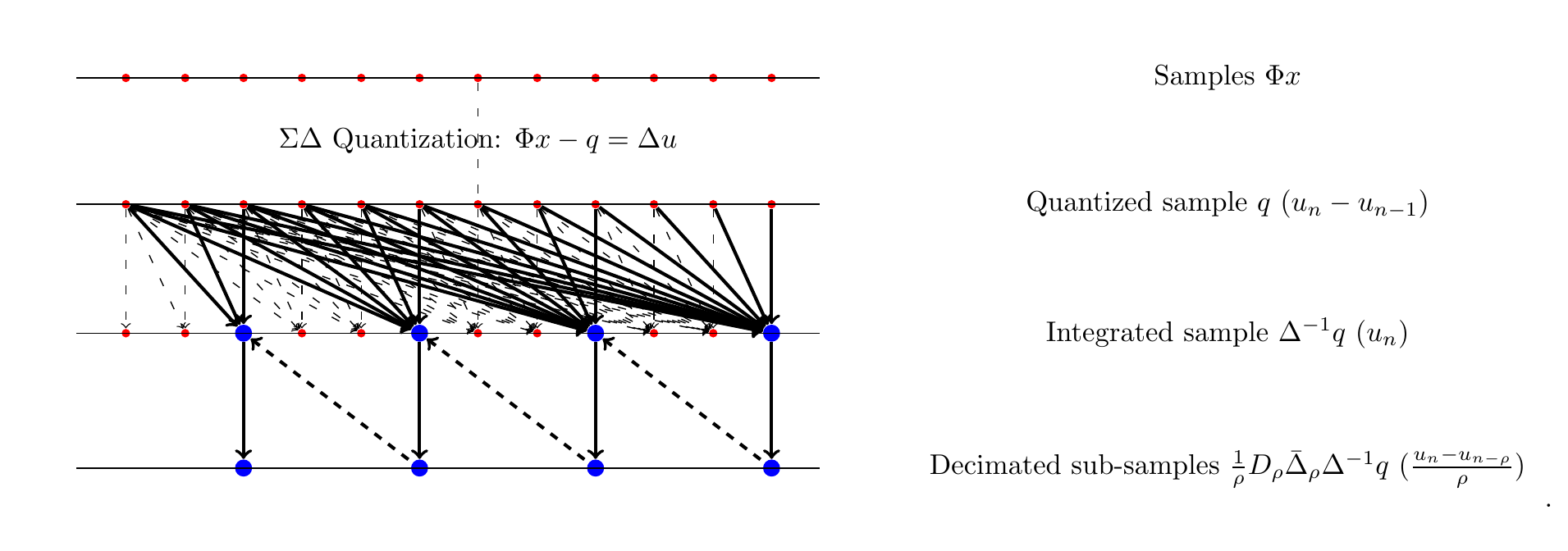}
\caption{Illustration of the first order adapted (alternative) decimation scheme for finite frames. After obtaining the quantized samples $\{q_n\}_n$ in the first step, one starts by integrating quantized samples in the second step. Finite difference of step size $\rho$ followed by sub-sampling are then taken in the third step. The effect on the reconstruction (replacing $q_n$ with $y_n-q_n$) is illustrated in parentheses. Note that both the recursivity and the boundary effect (see bottom left) can be seen in this diagram.}
\label{Diagram_AdaDec}
\end{figure}

\begin{remark}[Comparison between Alternative and Adapted Decimation]\label{Alt_vs_Ada}
	While coinciding for $r=1$, $A_r$ is different from $D_\rho S_\rho^r$ in the following way: $S_\rho=\frac{1}{\rho}\bar{\Delta}_\rho\Delta^{-1}$, and thus
\[
	D_\rho S_\rho^2=\frac{1}{\rho^2}D_\rho(\bar{\Delta}_\rho\Delta^{-1})^2\neq \frac{1}{\rho^2}D_\rho\bar{\Delta}_\rho^2\Delta^{-2}=A_2.
\]
	The non-commutativity between $\bar{\Delta}_\rho$ and $\Delta^{-1}$ limits the success of the alternative decimation, see Proposition A.1 in \cite{KL_2018}. Adapted decimation essentially factorizes the alternative decimation and re-arranges the terms. In doing so, the reconstruction error rate can now be of polynomial degree $r$. However, it also complicates the effect of decimation on finite frames, as will be seen in Section \ref{Ar_form}. For the illustration, see Figure \ref{Diagram_AdaDec}.
\end{remark}

	It will be shown that, for unitarily generated frames $\Phi\in\CC^{m\times k}$ satisfying conditions specified in Theorem \ref{main_uni} and \textit{any} $r\in\NN$, an $r$-th order $\Sigma\Delta$ quantization coupled with the corresponding adapted decimation has $r$-th order polynomial reconstruction error decay rate with respect to the ratio $\rho$. As for the data storage, decimation allows for highly efficient storage, making the error decay exponentially with respect to the bit usage.

\begin{theorem}
\label{main_uni}
	Given $\Omega$, $\phi_0$, $\{\lambda_j\}_j$, $\{v_j\}_j$, and $\Phi=\Phi_{m,k}$ as the generator, base vector, eigenvalues, eigenvectors, and the corresponding UGF, respectively, and $r\in\NN$ fixed. Suppose
\begin{itemize}
\item	$\rho\mid m$,
\item	$\eta=m/\rho\geq 3rk$,
\item	$\{\lambda_j\}_{j=1}^k\subset[-\eta/2,\eta/2]\cap\ZZ\backslash\{0\}$, and
\item $C_{\phi_0}=\min_s |{<}\phi_0,v_s{>}|^2>0$,
\end{itemize}
	then the following statements are true:
\begin{itemize}
\item[(a)] {\bf Recursivity:} For all $s\in\{1,\dots, \eta\}$, there exists $\{c_j^s\}_{j=1}^{s\rho}$ such that $(A_r q)_s=\sum_{j=1}^{s\rho}c_j^s q_j$.
\item[(b)] {\bf Signal reconstruction:} $A_r \Phi_{m,k}$ is a frame. 
\item[(c)] {\bf Error estimate:} Given the dual frame $F=(A_r \Phi_{m,k})^\dagger A_r$, where for any $M$, $M^\dagger=(M^\ast M)^{-1}M^\ast$ is defined to be the pseudo-inverse of $M$. Then the reconstruction error $\mathscr{E}_{m,\rho}=\|x-Fq\|_2$ satisfies
\begin{equation}
\label{error_est_uni}
	\mathscr{E}_{m,\rho}\leq\bigg(\frac{4}{k\eta C_{\phi_0}}(\pi^2\eta)^r\bigg)\|u\|_\infty\frac{1}{\rho^r}.
\end{equation}
\item[(d)] {\bf Efficient data storage:} Suppose the length of the quantization alphabet is $2L$, then the total bits used to record the quantized samples $A_r q$ are $\mathscr{R}=2\eta r\log(2m)+2\eta\log(2L)$ bits. Furthermore, as a function of bits used at each entry, $\mathscr{E}_{m,\rho}$ satisfies
\[
	\mathscr{E}(\mathscr{R})\leq C_{k,\eta,\phi_0,L}\|u\|_\infty 2^{-\frac{1}{2\eta}\mathscr{R}},
\]
	where $C_{k,\eta,\phi_0,L}=\frac{8L}{k\eta C_{\phi_0}}(2\pi^2)^r$, independent of $\rho$.
\end{itemize}
\end{theorem}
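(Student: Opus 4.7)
The plan is to reduce everything to a spectral analysis of $A_r$ acting on each ``eigen-column'' of $\Phi$, and then handle the inevitable boundary corrections. Writing $\Omega = B\Lambda B^*$ and $\phi_0 = \sum_s a_s v_s$, the columns of $\Phi B$ are length-$m$ exponentials $\bar a_s (z_s^j)_{j=1}^m$ with $z_s = e^{-2\pi i \lambda_s/m}$; since each $\lambda_s \in \ZZ$ one has $z_s^m = 1$, and since the $\lambda_s$ are distinct modulo $\eta$, the subsampled exponentials $(z_s^{l\rho})_{l=1}^\eta$ are $k$ orthogonal columns of the $\eta$-point DFT with norm $\sqrt{\eta}$. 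In the interior of the grid, one has the eigenvalue identities $\bar\Delta_\rho e^\omega = \tfrac{1-z^\rho}{\rho}\, e^\omega$ and $\Delta^{-r} e^\omega = c^r e^\omega + P_{r-1}$, with $c = z/(z-1)$ and $\deg P_{r-1} \leq r - 1$; since $\bar\Delta_\rho^r$ annihilates polynomials of degree less than $r$, the bulk identity $\bar\Delta_\rho^r \Delta^{-r} e^\omega = C(\lambda)^r e^\omega$ holds with $C(\lambda) = c(1-z^\rho)/\rho = -\tfrac{1}{\rho}\sum_{k=1}^\rho z^k$, and $|C(\lambda)|\in[2/\pi,1]$ on the given frequency range.

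The hard part is the boundary analysis: the above identities break at row $\rho$ (because of the extra $(\rho,m)$-entry of $\bar\Delta_\rho$) and near the last rows (because of the one-sided truncation). I would show, by induction on $r$ or by direct combinatorial expansion of $\bar\Delta_\rho = \tfrac{1}{\rho}(I - S^\rho) + \tfrac{1}{\rho}e_\rho e_m^T$, that after composition with $D_\rho$ the cumulative boundary contribution to $A_r e^{\omega_s}$ is a rank-$O(r)$ perturbation of the clean term $\tfrac{C(\lambda_s)^r}{\rho^r}\tilde e^{\omega_s}$, where $\tilde e^{\omega_s} = D_\rho e^{\omega_s}$ is the subsampled exponential; for $r=1$ an explicit calculation already yields $A_1 e^{\omega_s} = \tfrac{C(\lambda_s)}{\rho}(\tilde e^{\omega_s} - e_\eta)$, with $e_\eta$ the last standard basis vector of $\CC^\eta$. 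Consequently the Gram matrix of $A_r \Phi$ in the $B$-basis factors as $D^*(\eta I_k - E_r) D$, where $D = \mathrm{diag}(\bar a_s C(\lambda_s)^r/\rho^r)$ and $E_r$ has rank $O(r)$ with $O(1)$ entries. The hypothesis $\eta \geq 3rk$ then ensures $\lambda_{\min}(\eta I_k - E_r) \gtrsim \eta$, giving $\sigma_{\min}^2(A_r\Phi) \gtrsim \eta\, C_{\phi_0}(2/\pi)^{2r}/\rho^{2r}$, which is the frame claim (b) together with the operator-norm bound on $\|(A_r\Phi)^\dagger\|_2$ needed for (c).

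The remaining parts then follow more routinely. For (c), the crucial simplification $A_r\Delta^r = \tfrac{1}{\rho^r} D_\rho \bar\Delta_\rho^r$ gives $\|x - Fq\|_2 = \|(A_r\Phi)^\dagger A_r \Delta^r u\|_2 \leq \|(A_r\Phi)^\dagger\|_2 \cdot \tfrac{1}{\rho^r}\|D_\rho \bar\Delta_\rho^r u\|_2$, and the second factor is controlled by $\sqrt\eta\,(3/\rho)^r\|u\|_\infty$ via $\|\bar\Delta_\rho\|_{\infty\to\infty} \leq 3/\rho$ and $\|D_\rho\cdot\|_2\leq\sqrt\eta\|\cdot\|_\infty$, reproducing \eqref{error_est_uni}. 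For the recursivity (a), I would track the row-by-row support of $D_\rho\bar\Delta_\rho^r\Delta^{-r}$, using the near-bidiagonal structure of $\bar\Delta_\rho$ together with the triangular structure of $\Delta^{-r}$ to make the relevant telescoping sums collapse to coefficients supported on columns $1,\dots,s\rho$. Finally, (d) combines (a) with an entry-wise bound on the coefficients (denominator $\rho^r$ and numerator bounded by $(2m)^r$), showing that each of the $\eta$ complex entries of $A_r q$ requires $2(r\log 2m + \log 2L)$ bits, yielding $\mathscr{R} = 2\eta r\log(2m) + 2\eta\log(2L)$; eliminating $\rho$ from the bound in (c) using this bit-count gives the advertised exponential decay $\mathscr{E}(\mathscr{R}) \leq C_{k,\eta,\phi_0,L}\|u\|_\infty\, 2^{-\mathscr{R}/(2\eta)}$.
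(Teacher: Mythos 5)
Your overall architecture parallels the paper's: diagonalize via $B$ so that the columns of $\Phi B$ are exponentials, identify a ``bulk'' multiplier $C(\lambda)^r$ with $|C(\lambda)|\in[2/\pi,1]$ (this is the paper's $\frac{1}{\rho}\tilde{D}\tilde{C}$), isolate boundary corrections supported on $O(r)$ rows, and then count bits. Your error estimate via $\|(A_r\Phi)^\dagger\|_2\cdot\rho^{-r}\|D_\rho\bar{\Delta}_\rho^r u\|_2$ is a legitimate and arguably cleaner route than the paper's, which instead bounds $\|\Scal^{-1}\|_2\cdot\|(A_r\Phi)^\ast\Delta^r\|_{\infty,2}$ using a separate frame-variation lemma. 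Your recursivity and bit-counting sketches are essentially the paper's (the paper makes the support argument precise through the identity $D_\rho\bar{\Delta}_\rho=\Delta^{(\eta)}D_\rho$, which you would need to invoke to neutralize the forward-looking $(\rho,m)$-entry).

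The genuine gap is in the lower frame bound, which everything else depends on. You assert that the Gram matrix factors as $D^\ast(\eta I_k-E_r)D$ with $E_r$ of rank $O(r)$ and $O(1)$ entries, so that $\eta\geq 3rk$ forces $\lambda_{\min}\gtrsim\eta$. Neither half of this is justified: the boundary rows are \emph{not} $O(1)$ --- they carry factors of $\tilde{C}^l/\rho^l$, whose norm is only bounded by $\eta^l$, so the cross terms in the Gram matrix can be of size $\eta^r\sqrt{\eta}$ --- and even the one-sided version of your argument (discard the $O(r)$ contaminated rows, which is legal since their actual contributions are positive semidefinite) only yields $\lambda_{\min}\geq \eta C_{\phi_0}(2/\pi)^{2r}-O(r)$. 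That quantity is not positive under the stated hypotheses: since $\sum_s|\langle\phi_0,v_s\rangle|^2=1$ forces $kC_{\phi_0}\leq1$, the hypothesis $\eta\geq 3rk$ only guarantees $\eta C_{\phi_0}(2/\pi)^{2r}\geq 3r k C_{\phi_0}(2/\pi)^{2r}$, which can be smaller than $2r$ (indeed it is already smaller for $kC_{\phi_0}=1$, $\eta=3rk$, $r=1$). The paper avoids any such subtraction: it selects $k$ equally spaced rows with indices $l\lfloor\eta/k\rfloor>2r$, all of which lie in the clean region and together form an exact $k$-point unitarily generated subframe whose lower bound is $kC_{\phi_0}$ outright (Lemma \ref{Phi_LB}), giving $kC_{\phi_0}(2/\pi)^{2r}$ --- note $k$, not $\eta$. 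You would need to adopt this (or an equivalent device) for claim (b) to hold; as written your bound is vacuous in the regime $\eta\asymp rk$ permitted by the theorem. A smaller symptom of the same issue: your $r=1$ sanity check $A_1e^{\omega}= \frac{C(\lambda)}{\rho}(\tilde{e}^{\omega}-e_\eta)$ is incorrect --- the operator is causal, so any residual sits in the \emph{first} rows, and for $r=1$ the boundary terms in fact cancel exactly.
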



\section{Proof of Main Results}\label{proof}

\subsection{Roadmap of the Proof}\label{roadmap}
        
        In this subsection, we shall identify the key components regarding the proof of Theorem \ref{main_uni}. Then, we will provide estimates for those components in Sections 4.2-4.5 before finishing the proof in Section \ref{main_proof}. 
        
        To estimate the reconstruction error $\mathscr{E}_{m,\rho}=\|x-(A_r\Phi_{m,k})^\dagger A_r q\|_2$ in \eqref{error_est_uni}, we re-write the form of $A_r$, making the estimate simpler. In particular, we claim that $\bar{\Delta}_\rho$ scales down to the usual backward-difference matrix under the under-sampling matrix $D_\rho$:

\begin{lemma}
\label{twist_scale}
	Given $m,\rho\in\NN$ with $\eta=m/\rho\in\NN$,	
\[
D_\rho\bar{\Delta}_\rho=\Delta^{(\eta)}D_\rho,
\]
\end{lemma}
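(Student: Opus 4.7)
The plan is a direct entry-by-entry verification. Both $D_\rho\bar\Delta_\rho$ and $\Delta^{(\eta)}D_\rho$ are $\eta\times m$ matrices, so I fix indices $l\in\{1,\dots,\eta\}$ and $s\in\{1,\dots,m\}$ and compute each side separately.

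For the left-hand side, since $(D_\rho)_{l,j}=\delta(j-l\rho)$, left-multiplication by $D_\rho$ simply selects row $l\rho$ of $\bar\Delta_\rho$. Thus $(D_\rho\bar\Delta_\rho)_{l,s}=(\bar\Delta_\rho)_{l\rho,s}$, which by the definition of $\bar\Delta_\rho$ unpacks into a combination of three Kronecker-delta terms at $s=l\rho$, $s=(l+1)\rho$, and the boundary correction $\delta(s-m)\delta((l-1)\rho)$ that is nonzero only when $l=1$. For the right-hand side, using that $\Delta^{(\eta)}$ has $+1$ on the diagonal and $-1$ on the first subdiagonal, one line of matrix multiplication gives $(\Delta^{(\eta)}D_\rho)_{l,s}=(D_\rho)_{l,s}-(D_\rho)_{l-1,s}=\delta(s-l\rho)-\delta(s-(l-1)\rho)$, with the convention that the second term is dropped when $l=1$.

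It then remains to match the two expressions. For $l\in\{2,\dots,\eta\}$ the boundary-correction term switches off, and the identity reduces to the standard signal-processing fact that a step-$\rho$ difference evaluated on the $\rho$-subsampled grid is a step-$1$ difference on that grid. For the boundary case $l=1$, the correction $\delta(s-m)\delta((l-1)\rho)$ in $\bar\Delta_\rho$ supplies exactly the offset needed to cancel the contribution from the cyclic extension convention (inherited from the definition of $S_\rho$), so that both sides collapse to the single surviving $\delta(s-\rho)$. The only piece of real care is the boundary bookkeeping at $l=1$; everything else is mechanical arithmetic with Kronecker deltas.
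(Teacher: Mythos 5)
Your overall strategy---an entry-by-entry verification with the boundary row treated separately---is exactly the paper's, but the one computation that carries the whole lemma goes wrong. You place the negative delta of $(\bar{\Delta}_\rho)_{l\rho,\,\cdot}$ at $s=(l+1)\rho$, reading the printed formula $-\delta(l+\rho-s)$ literally, and then for the right-hand side you correctly get $\delta(s-l\rho)-\delta(s-(l-1)\rho)$. Those two expressions are \emph{not} equal: one has its $-1$ one block to the right of the diagonal, the other one block to the left. The step where you declare that they match by ``the standard signal-processing fact that a step-$\rho$ difference on the subsampled grid is a step-$1$ difference'' is precisely where the proof has to do its work, and as you have set it up that step fails. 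For the identity to hold, $\bar{\Delta}_\rho$ must be a \emph{backward} difference of step $\rho$, i.e.\ its off-diagonal entry sits at $s=l-\rho$ (sub-diagonal), so that row $l\rho$ reads $\delta(s-l\rho)-\delta(s-(l-1)\rho)$; the paper's own proof silently computes with this version (writing $\delta(s+\rho-l\rho)$), which indicates the displayed definition contains an index transposition. A correct blind proof should either detect that the forward-difference reading makes the lemma false and the operator $A_r$ non-causal (contradicting part (a) of the main theorem), or at minimum actually verify that the delta positions coincide instead of asserting it.

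Your boundary bookkeeping inherits the same inconsistency: you correctly identify that the correction term $\delta(s-m)\delta((l-1)\rho)$ fires only at $l=1$, $s=m$, and you correctly state the conclusion that row $1$ of both sides collapses to $\delta(s-\rho)$, but that conclusion only follows if the term being cancelled is the cyclic wrap-around $-\delta(s-(1-1)\rho)=-\delta(s-m)$---i.e.\ again the backward-difference placement, not the $-\delta(s-2\rho)$ your unpacking produces. Once the off-diagonal entry is put at $s=(l-1)\rho$, everything you wrote goes through and matches the paper's argument line for line (the paper splits on the column $s=m$ versus $s\neq m$ rather than on the row $l=1$, but the two case divisions are equivalent). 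One further small point both you and the paper's proof elide: the printed definition of $\bar{\Delta}_\rho$ carries a prefactor $\tfrac{1}{\rho}$ which, if taken literally, would make the identity $D_\rho\bar{\Delta}_\rho=\tfrac{1}{\rho}\Delta^{(\eta)}D_\rho$; it is worth flagging that this normalization is evidently meant to live in the $\rho^{-r}$ factor of $A_r$ instead.
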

	where $\Delta^{(\eta)}$ is the $\eta$-dimensional backward difference matrix.
\begin{proof}

	Note that, for $s\neq m$,
\[
\begin{split}
	(D_\rho\bar{\Delta}_\rho)_{l,s}&=(\bar{\Delta}_\rho)_{l\rho,s}\\
			&=\delta(s-l\rho)-\delta(s+\rho-l\rho)=\delta(s-l\rho)-\delta(s-(l-1)\rho)\\
			&=(\Delta D_\rho)_{l,s}.
\end{split}
\]
	For $s=m$, $(D_\rho\bar{\Delta}_\rho)_{l,m}=\delta(m-l\rho)=(\Delta D_\rho)_{l,m}$.
\end{proof}

	Then, the reconstruction error $\mathscr{E}_{m,\rho}$ satisfies

\begin{equation}
\label{E_est_RP}
\begin{split}
	\mathscr{E}_{m,\rho}&=\|x-(A_r\Phi_{m,k})^\dagger A_r q\|_2\\
				&=\|(A_r\Phi_{m,k})^\dagger A_r(\Phi_{m,k} x-q)\|_2\\
				&=\|(A_r\Phi_{m,k})^\dagger \frac{1}{\rho^r}D_\rho\bar{\Delta}^r\Delta^{-r}(\Delta^r u)\|_2\\
				&=\frac{1}{\rho^r}\|\big((A_r\Phi_{m,k})^\ast A_r\Phi_{m,k}\big)^{-1}(A_r\Phi_{m,k})^\ast \Delta^r D_\rho u\|_2\\
				&\leq\|\big((A_r\Phi_{m,k})^\ast A_r\Phi_{m,k}\big)^{-1}\|_2\cdot\|(A_r\Phi_{m,k})^\ast\Delta^r\|_{\infty,2}\cdot\|u\|_\infty\frac{1}{\rho^r},
\end{split}
\end{equation}
        where the fourth equality follows from Lemma \ref{twist_scale}. We have seen from Remark \ref{frame_bound_remark} that $\|((A_r\Phi_{m,k})^\ast A_r\Phi_{m,k})^{-1}\|_2$ is the reciprocal of the lower frame bound of $A_r\Phi_{m,k}$. Thus, in order to estimate \eqref{E_est_RP}, we need only to answer two questions: 

\begin{itemize}
\item	Is $A_r\Phi_{m,k}$ a frame? What is the lower frame bound of $A_r\Phi_{m,k}$?
\item	What is $\|(A_r\Phi_{m,k})^\ast\Delta^r\|_{\infty,2}$?
\end{itemize}

	The lower frame bound of $A_r\Phi_{m,k}$ will be calculated in Section \ref{LFBA}, specifically in Proposition \ref{lower_frame_bound}. As for the estimate in the second question, it is given in Proposition \ref{variation_bound} of Section \ref{FVB}.
	
	Aside from the reconstruction error estimate, we also need to calculate the number of bits needed to record the decimated sample $A_r q$. We shall show that $A_r q$ can be efficiently stored in $O(\log \rho)$ instead of $O(\rho)$ bits. The explicit estimate will be done in Proposition \ref{data_storage} of Section \ref{DSE}.

\subsection{Expansion of $A_r\Phi_{m,k}$}\label{Ar_form}

	In \cite{KL_2018}, one has, for any $r\in\NN$, the alternative decimation satisifes 
\[
D_\rho S_\rho^r\Phi_{m,k}=\frac{1}{\rho^r}D_\rho(\bar{\Delta}_\rho\Delta^{-1})^r\Phi_{m,k}=\Phi_{\eta,k}(\tilde{D}\tilde{C})^r
\] 
	where $\tilde{D},\tilde{C}\in\CC^k$ will be defined in Section \ref{EADF}. The form is rather simple thanks to the alternating applications of $\bar{\Delta}_\rho$ and $\Delta^{-1}$. For adapted decimation, we have $A_r=\frac{1}{\rho^r}D_\rho\bar{\Delta}_\rho^r\Delta^{-r}$, and the displaced order of applications creates residual terms other than $\Phi_{\eta,k}(\tilde{D}\tilde{C})^r$. In this section, we observe this phenomenon and examine the effect of the residual terms.

\subsubsection{The Effect of Adapted Decimation on the Frame}\label{EADF}

	We start by introducing the following notation:

\begin{definition}

  Given $l,s\in\NN$, the $l$-by-$s$ constant matrix $1_{l,s}$ has constant $1$ on all entries.

\end{definition}

	The following two lemmas are needed for us to describe $A_r\Phi_{m,k}$ in Proposition \ref{deci_expansion}.

\begin{lemma}
\label{C_lemma}
  Given $\Phi=\Phi_{m,k}\in\CC^{m\times k}$ with base vector $\phi_0$, we have
  \[
  \Delta^{-1}\Phi =(\Phi-1_{m,k}V)\tilde{C},
  \]
  where $\tilde{C}$ and $U_t$ are simultaneously diagonalizable with $B^\ast\tilde{C}B=\tilde{C}_0=diag(\frac{1}{1-e^{2\pi\imath \lambda_s/m}})_{1\leq s\leq m}$ and $V=diag(\phi_0)$.

\end{lemma}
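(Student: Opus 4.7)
The plan is to diagonalize everything via the unitary $B$ at the outset, reducing the identity to a one-dimensional finite geometric-series computation on each spectral eigenspace of $\Omega$. Since $U_t = Be^{2\pi\imath\Lambda t}B^\ast$, the $j$-th row $\phi_j^\ast$ of $\Phi$ can be written as $\phi_0^\ast U_{-j/m} = (\phi_0^\ast B)\,e^{-2\pi\imath\Lambda j/m}\,B^\ast$. Assembling these rows, I would factor $\Phi = E W B^\ast$, where $E \in \CC^{m\times k}$ is the Vandermonde-type matrix with $E_{j,s} = e^{-2\pi\imath\lambda_s j/m}$ and $W$ is the $k\times k$ diagonal matrix carrying the coordinates of $\phi_0$ in the $B$-basis (up to conjugation). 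This factorization isolates the action of $\Delta^{-1}$ to the first factor alone.

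Next, since $\Delta^{-1}$ is cumulative summation on the row index, the $(j,s)$-entry of $\Delta^{-1}E$ is the finite geometric sum $\sum_{l=1}^{j} e^{-2\pi\imath\lambda_s l/m}$. By the spectral hypothesis $\lambda_s \in [-\eta/2,\eta/2]\cap\ZZ\setminus\{0\}$ together with $\eta \leq m$, the common ratio $e^{-2\pi\imath\lambda_s/m}$ differs from $1$, so the closed form for the geometric series rearranges to
\[
(\Delta^{-1}E)_{j,s} \;=\; \frac{E_{j,s} - 1}{1 - e^{2\pi\imath\lambda_s/m}}.
\]
In matrix form this is precisely $\Delta^{-1}E = (E - 1_{m,k})\tilde{C}_0$, with $\tilde{C}_0$ the diagonal matrix of the prescribed form.

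Finally, I would reassemble in the $\Phi$-language: since both $W$ and $\tilde{C}_0$ are diagonal, they commute, so
\[
\Delta^{-1}\Phi = (\Delta^{-1}E)WB^\ast = (E - 1_{m,k})W\tilde{C}_0 B^\ast = \big(EWB^\ast - 1_{m,k}WB^\ast\big)\,B\tilde{C}_0 B^\ast,
\]
which equals $(\Phi - 1_{m,k}V)\tilde{C}$ upon setting $\tilde{C} := B\tilde{C}_0 B^\ast$ (so that $B^\ast\tilde{C}B = \tilde{C}_0$ as required) and reading off $1_{m,k}V = 1_{m,k}WB^\ast$ by a direct row-by-row computation that identifies each row with $\phi_0$ under the appropriate conjugation convention, giving $V = \mathrm{diag}(\phi_0)$.

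The bulk of the argument is then routine; the main nuisance will be careful bookkeeping for the index offset (where the cumulative sum in $\Delta^{-1}$ starts) and for complex conjugations when passing between $\phi_0$, $\phi_0^\ast$, and $B^\ast\phi_0$. The underlying algebraic content is merely the finite geometric series, whose invertibility is guaranteed by the spectral hypothesis that no eigenvalue of $\Omega$ lies in $m\ZZ$.
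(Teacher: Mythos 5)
Your proposal is correct and takes essentially the same route as the paper: both diagonalize $U_t$ via the unitary $B$, reduce the action of $\Delta^{-1}$ on $\Phi$ to a finite geometric series in each ratio $e^{2\pi\imath\lambda_s/m}\neq 1$, and reassemble, your factorization $\Phi=EWB^\ast$ being a global repackaging of the paper's row-by-row computation. The conjugation bookkeeping you flag at the end is a genuine (if minor) wrinkle that the paper's own proof also glosses over when it identifies the constant rows $\phi_0^\ast\tilde{C}$ with $1_{m,k}V\tilde{C}$.
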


\begin{proof}

	For any $1\leq t\leq m$, the $t$-th row of $\Delta^{-1}\Phi_{m,k}$ can be written as
\[
\begin{split}
	(\Delta^{-1}\Phi_{m,k})_t&=(\sum_{s=1}^t U_{s/m}\phi_0)^\ast\\
						&=(\sum_{s=1}^t BT_{s/m}B^\ast\phi_0)^\ast\\
						&=(B\sum_{s=1}^t T_{s/m} B^\ast\phi_0)^\ast,
\end{split}
\]
	where we note that  $U_t=BT_t B^\ast$ can be diagonalized by the unitary matrix $B=B_\Phi$, and $T_t=e^{2\pi\imath \Lambda t}=diag(\exp(2\pi\imath \lambda_s t))_s$. Now,
\[
\begin{split}
	\sum_{s=1}^t (T_{s/m})_\sigma&=\sum_{s=1}^t e^{2\pi\imath \lambda_\sigma s/m}\\
						&=\frac{e^{2\pi\imath\lambda_\sigma t/m}-1}{e^{2\pi\imath\lambda_\sigma/m}-1}\\
						&=(T_{t/m})_\sigma\frac{1}{e^{2\pi\imath\lambda_\sigma/m}-1}-\frac{1}{e^{2\pi\imath\lambda_\sigma/m}-1}\\
						&=(\tilde{C}_0T_{t/m}-\tilde{C}_0)_\sigma,
\end{split}
\]

	Then,
\[
\begin{split}
		(\Delta^{-1}\Phi_{m,k})_t&=(B\sum_{s=1}^t T_{s/m}B^\ast\phi_0)^\ast\\
							&=(B\tilde{C}_0 B^\ast U_{t/m}\phi_0)^\ast -(B\tilde{C}_0 B^\ast\phi_0)^\ast\\
							&=\phi_t^\ast(B\tilde{C}_0B^\ast)^\ast-\phi_0^\ast (B\tilde{C}_0B^\ast)^\ast\\
							&=(\Phi_{m,k})_t\tilde{C}-\phi_0^\ast\tilde{C}.
\end{split}
\]

	Thus, $\Delta^{-1}\Phi_{m,k}=\Phi_{m,k}\tilde{C}-1_{m,k}V\tilde{C}$.
\end{proof}


\begin{lemma}
\label{D_lemma}
 \[
 \bar{\Delta}_\rho\Phi=\Phi\tilde{D}+\bar{\Delta}_\rho 1_{m,k}V,
 \]
        where $B^\ast\tilde{D}B=diag(1-e^{2\pi\imath\rho n_s/m})_{1\leq s\leq m}$.
\end{lemma}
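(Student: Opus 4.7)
The plan is to mirror the proof of Lemma \ref{C_lemma}: compute $\bar{\Delta}_\rho\Phi$ row by row, use the UGF group identity $\phi_{l+\rho} = U_{\rho/m}\phi_l$ to extract a common right-multiplier $\tilde{D}$ from the rows of $\Phi$, and match the boundary residual to $\bar{\Delta}_\rho 1_{m,k} V$. The enabling fact from the hypotheses of Theorem \ref{main_uni} is that $\{\lambda_j\} \subset \ZZ$ forces $U_1 = e^{2\pi\imath\Omega} = I$, so $\{\phi_l\}_{l\in\ZZ}$ is $m$-periodic and in particular $\phi_m = \phi_0$; this is exactly what reconciles the cyclic convention used for $\bar{\Delta}_\rho$ with the group action.

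For a generic row $l \neq \rho$, only the terms $\delta(l-s)$ and $-\delta(l+\rho-s)$ contribute, so by periodicity and the UGF identity,
\[
(\bar{\Delta}_\rho\Phi)_l = \phi_l^* - \phi_{l+\rho}^* = \phi_l^*\,(I - U_{-\rho/m}).
\]
This identifies $\tilde{D} := I - U_{-\rho/m}$, and since $B$ simultaneously diagonalizes every $U_t$, conjugation yields $B^*\tilde{D}B = \text{diag}(1 - e^{-2\pi\imath\lambda_s\rho/m})_{1 \leq s \leq k}$, matching the claimed diagonal (the exponent sign and the notation $n_s$ versus $\lambda_s$ being a matter of convention). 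For the exceptional row $l = \rho$, the extra term $+\delta(s-m)\delta(l-\rho)$ contributes $+\phi_m^* = +\phi_0^*$, so
\[
(\bar{\Delta}_\rho\Phi)_\rho = \phi_\rho^*\tilde{D} + \phi_0^* = (\Phi\tilde{D})_\rho + \phi_0^*.
\]

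To close the identity, I would verify directly that $\bar{\Delta}_\rho 1_{m,k} V$ is supported only on row $\rho$ and equals $\phi_0^*$ there. Every row of $1_{m,k}V$ equals $\phi_0^*$ (as a row under the paper's convention), so row $l$ of $\bar{\Delta}_\rho 1_{m,k}V$ is $r_l\phi_0^*$ with $r_l = \sum_s (\bar{\Delta}_\rho)_{l,s}$; a brief telescoping under the cyclic convention gives $r_l = 1 - 1 + \delta(l-\rho) = \delta(l-\rho)$, completing the match with the residual above.

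The main obstacle is the bookkeeping around the exceptional row, especially in the degenerate sub-case $2\rho = m$, where the terms $-\delta(l+\rho-s)$ and $+\delta(s-m)\delta(l-\rho)$ collide at $(l,s) = (\rho,m)$ and cancel inside $(\bar{\Delta}_\rho\Phi)_\rho$. The identity still holds there because the collision forces $\phi_{2\rho} = \phi_m = \phi_0$, and hence $\phi_\rho^*\tilde{D} + \phi_0^* = \phi_\rho^* - \phi_0^* + \phi_0^* = \phi_\rho^*$, which is exactly what $(\bar{\Delta}_\rho\Phi)_\rho$ reduces to after the cancellation. Outside this case, the argument is pure algebraic substitution using the UGF identity, in complete parallel with the proof of Lemma \ref{C_lemma}.
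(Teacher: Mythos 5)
Your proof takes the same route as the paper's: compute $\bar{\Delta}_\rho\Phi$ row by row, use the group identity to pull the factor $\tilde{D}$ out to the right, and identify the leftover boundary term with $\bar{\Delta}_\rho 1_{m,k}V$ via the row sums of $\bar{\Delta}_\rho$. You are in fact more explicit than the paper about two facts it uses silently: that integrality of the $\lambda_j$ gives $U_1=I$ and hence $\phi_m=\phi_0$, and that the row sums of $\bar{\Delta}_\rho$ are $\delta(l-\rho)$.

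The one point that needs fixing is the direction of the step-$\rho$ difference, and it is not a free convention. You read $-\delta(l+\rho-s)$ literally as selecting $s=l+\rho$, so your generic row is $\phi_l^\ast-\phi_{l+\rho}^\ast$ and your $\tilde{D}=I-U_{-\rho/m}$ has diagonal entries $1-e^{-2\pi\imath\lambda_s\rho/m}$, the complex conjugates of the stated ones. The paper's own proof takes the row to be $\phi_l^\ast-\phi_{l-\rho}^\ast$, yielding $\tilde{D}=I-U_{\rho/m}$ with diagonal $1-e^{2\pi\imath\lambda_s\rho/m}$ exactly as claimed. The backward reading is the intended one: it is what Lemma \ref{twist_scale} requires (so that $(\bar{\Delta}_\rho v)_{l\rho}=v_{l\rho}-v_{(l-1)\rho}$ and $D_\rho\bar{\Delta}_\rho=\Delta^{(\eta)}D_\rho$), and it is what makes $\frac{1}{\rho}\bar{\Delta}_\rho\Delta^{-1}$ a local averaging operator; the displayed definition of $\bar{\Delta}_\rho$ simply carries a sign typo (read $-\delta(l-\rho-s)$). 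Once you adopt the backward reading, your argument goes through verbatim --- the correction term $+\delta(s-m)\delta(l-\rho)$ contributes $\phi_m^\ast=\phi_0^\ast$ in row $\rho$ for every $\rho$, so your special-casing of $2\rho=m$ also disappears --- and you recover precisely the paper's $\tilde{D}$.
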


\begin{proof}

	For any $1\leq t\leq m$,
\[
\begin{split}
	(\bar{\Delta}_\rho \Phi_{m,k})_t&=(U_{t/m}\phi_0 -U_{(t-\rho)/m}\phi_0)^\ast+\delta(t-\rho)\phi_0^\ast\\
			&=(B(I_k-T_{-\rho/m})B^\ast U_{t/m}\phi_0)^\ast+\delta(t-\rho)\phi_0^\ast\\
			&=\phi_{t}^\ast B(I_k-T_{\rho/m})B^\ast+\bar{\Delta}_\rho 1_{m,k}V\\
			&=(\Phi\tilde{D})_t+(\bar{\Delta}_\rho 1_{m,k}V)_t.
\end{split}
\]
\end{proof}


	Combining Lemma \ref{C_lemma} and \ref{D_lemma}, one has the following expansion:

\begin{proposition}
\label{deci_expansion}
  Given $r,m,\rho\in\NN$,
  \begin{equation}
 \label{eq:deci_expansion}
  \rho^r A_r\Phi_{m,k}=D_\rho\bar{\Delta}_\rho^r\Delta^{-r}\Phi_{m,k}=D_\rho\bigg[\Phi_{m,k}\tilde{D}^r\tilde{C}^r+\sum_{j=0}^{r-1}\bar{\Delta}_\rho^{r-j}1_{m,k}V\tilde{D}^{j}\tilde{C}^r-\bar{\Delta}_\rho^r\sum_{j=0}^{r-1}\Delta^{-j}1_{m,k}V\tilde{C}^{r-j}\bigg].
  \end{equation}

\end{proposition}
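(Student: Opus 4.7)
The plan is to prove the two equalities separately. The first, $\rho^r A_r \Phi_{m,k} = D_\rho \bar{\Delta}_\rho^r \Delta^{-r} \Phi_{m,k}$, is immediate from the definition $A_r = \rho^{-r} D_\rho \bar{\Delta}_\rho^r \Delta^{-r}$. For the second equality, my strategy is to obtain closed-form expressions for $\Delta^{-r}\Phi_{m,k}$ and for $\bar{\Delta}_\rho^r\Phi_{m,k}$ separately via induction on $r$ using Lemmas \ref{C_lemma} and \ref{D_lemma}, and then combine them.

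For the first ingredient, I would show by induction on $r$ that
\[
\Delta^{-r}\Phi_{m,k} = \Phi_{m,k}\tilde{C}^r - \sum_{j=0}^{r-1}\Delta^{-j}1_{m,k}V\tilde{C}^{r-j}.
\]
The base case $r=1$ is Lemma \ref{C_lemma}. In the inductive step, apply $\Delta^{-1}$ to the right-hand side: the main term becomes $(\Phi_{m,k}\tilde{C} - 1_{m,k}V\tilde{C})\tilde{C}^r$ by Lemma \ref{C_lemma}, while each $\Delta^{-j}$ in the sum becomes $\Delta^{-(j+1)}$. Re-indexing $j' = j+1$ and absorbing the new $1_{m,k}V\tilde{C}^{r+1}$ term at $j'=0$ recovers the formula at level $r+1$. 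An entirely analogous induction based on Lemma \ref{D_lemma}, where the extra boundary term $\bar{\Delta}_\rho 1_{m,k} V \tilde{D}^r$ produced at each step slots into the shifted sum, yields
\[
\bar{\Delta}_\rho^r\Phi_{m,k} = \Phi_{m,k}\tilde{D}^r + \sum_{j=0}^{r-1}\bar{\Delta}_\rho^{r-j}1_{m,k}V\tilde{D}^j.
\]

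To finish, I would apply $\bar{\Delta}_\rho^r$ to the first displayed formula, using associativity to pull $\tilde{C}^r$ to the right:
\[
\bar{\Delta}_\rho^r\Delta^{-r}\Phi_{m,k} = (\bar{\Delta}_\rho^r\Phi_{m,k})\tilde{C}^r - \bar{\Delta}_\rho^r\sum_{j=0}^{r-1}\Delta^{-j}1_{m,k}V\tilde{C}^{r-j}.
\]
Substituting the second formula for $\bar{\Delta}_\rho^r\Phi_{m,k}$ and distributing $\tilde{C}^r$ through the sum gives exactly the expression in \eqref{eq:deci_expansion} after left-multiplication by $D_\rho$. Notice that no commutativity between $\tilde{C}$ and $\tilde{D}$ is needed: in both terms the factors appear to the right of $\Phi_{m,k}$ (or of $V$) in the same order as in the claimed formula. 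The only real obstacle is careful bookkeeping of summation indices in the two inductions—verifying that the new boundary contribution at each step absorbs cleanly into the sum after re-indexing—which is routine but easy to mis-handle.
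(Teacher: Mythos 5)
Your proposal is correct and follows essentially the same route as the paper: the paper likewise establishes the two closed forms $\Delta^{-q}\Phi=\Phi\tilde{C}^q-\sum_{j=0}^{q-1}\Delta^{-j}1_{m,k}V\tilde{C}^{q-j}$ and $\bar{\Delta}_\rho^q\Phi=\Phi\tilde{D}^q+\sum_{j=0}^{q-1}\bar{\Delta}_\rho^{q-j}1_{m,k}V\tilde{D}^j$ by induction from Lemmas \ref{C_lemma} and \ref{D_lemma}, and then composes them exactly as you do. The index bookkeeping you describe is the same as in the paper's inductive steps, so nothing further is needed.
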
  

\begin{remark}
	Note that $\tilde{D}\tilde{C}=\tilde{C}\tilde{D}$ as they are simultaneously diagonalizable by $B_\Phi$, and thus $\tilde{D}^r\tilde{C}^r=(\tilde{D}\tilde{C})^r$.
\end{remark}

\begin{proof}
	First, we claim that, for $1\leq q\leq r$, $\Delta^{-q}\Phi=\Phi\tilde{C}^q-\sum_{j=0}^{q-1}\Delta^{-j}1_{m,k}V\tilde{C}^{q-j}$.
	
	For $q=1$, $\Delta^{-1}\Phi=\Phi\tilde{C}-1_{m,k}V\tilde{C}$ by Lemma \ref{C_lemma}. For $q>1$,
\[
\begin{split}
	\Delta^{-q}\Phi&=\Delta^{-1}\bigg(\Phi\tilde{C}^{q-1}-\sum_{j=0}^{q-2}\Delta^{-j}1_{m,k}V\tilde{C}^{q-1-j}\bigg)\\
			&=\Phi\tilde{C}^{q}-1_{m,k}V\tilde{C}^q-\sum_{s=1}^{q-1}\Delta^{-s}1_{m,k}V\tilde{C}^{q-s}\\
			&=\Phi\tilde{C}^q-\sum_{j=0}^{q-1}\Delta^{-j}1_{m,k}V\tilde{C}^{q-j}.
\end{split}
\]
	As for the effect of $\bar{\Delta}_\rho$, we claim that $\bar{\Delta}_\rho^q\Phi=\Phi\tilde{D}^q+\sum_{j=0}^{q-1}\bar{\Delta}_\rho^{q-j}1_{m,k}V\tilde{D}^j$ for $1\leq q\leq r$.
	
	For $q=1$, $\bar{\Delta}_\rho\Phi=\Phi\tilde{D}+\bar{\Delta}_\rho 1_{m,k}V$ by Lemma \ref{D_lemma}. For $q>1$,
\[
\begin{split}
	\bar{\Delta}_\rho^q\Phi&=\bar{\Delta}_\rho\bigg(\Phi\tilde{D}^{q-1}+\sum_{j=0}^{q-2}\bar{\Delta}_\rho^{q-1-j}1_{m,k}V\tilde{D}^j\bigg)\\
				&=\Phi\tilde{D}^q+\bar{\Delta}_\rho 1_{m,k}V\tilde{D}^{q-1}+\sum_{j=0}^{q-2}\bar{\Delta}_\rho^{q-j}1_{m,k}V\tilde{D}^j\\
				&=\Phi\tilde{D}^q+\sum_{j=0}^{q-1}\bar{\Delta}_\rho^{q-j}1_{m,k}V\tilde{D}^j.
\end{split}
\]
	From the two assertions above, we get
\[
\begin{split}
	\bar{\Delta}_\rho^r\Delta^{-r}\Phi&=\bar{\Delta}_\rho^r\bigg(\Phi\tilde{C}^r-\sum_{j=0}^{r-1}\Delta^{-j}1_{m,k}V\tilde{C}^{r-j}\bigg)\\
				&=\Phi\tilde{D}^r\tilde{C}^r+\sum_{j=0}^{r-1}\bar{\Delta}_\rho^{r-j}1_{m,k}V\tilde{D}^j\tilde{C}^r-\bar{\Delta}_\rho^r\sum_{j=0}^{r-1}\Delta^{-j}1_{m,k}V\tilde{C}^{r-j}.
\end{split}
\]

\end{proof}

\subsubsection{Cancellation Between Residual Terms of $A_r\Phi_{m,k}$}\label{RTAP}
	From \eqref{eq:deci_expansion}, we can divide $A_r\Phi_{m,k}$ into two parts: $\frac{1}{\rho^r}D_\rho\Phi_{m,k}\tilde{D}^r\tilde{C}^r$ being the main term, and the rest being residual terms. In this section, we shall investigate the behavior of the residual terms.

        To facilitate the cancellation, we define an auxiliary double-sequence $\{a_{l,s}\}_{l\geq0,s\in\ZZ}$ recursively by

\[
  a_{l,s}=\left\{\begin{array}{ll}
  1&\text{if }\quad l=0,\, s\geq1\\
  0&\text{if }\quad l=0,\, s\leq0\\
  \sum_{j\leq s}a_{l-1,j}&\text{if} \quad l>0.\end{array}\right.
\]

	Let $D_\rho\bar{\Delta}_\rho^{r-j}1_{m,k}V\tilde{D}^j\tilde{C}^r=I_j^{(2)}$ and $D_\rho\bar{\Delta}_\rho^r\Delta^{-j}1_{m,k}V\tilde{C}^{r-j}=I_j^{(3)}$. We first examine the form of each $I_j^{(3)}$ before calculating the cancellation between $I_j^{(2)}$ and $I_j^{(3)}$.

\begin{lemma}
  For any $j\in\NN$ and $1\leq l\leq m$,
\[
        (\Delta^{-j}1_{m,k})_{l,s}=a_{j,l}.
\]
\end{lemma}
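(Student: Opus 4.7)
The plan is to prove the identity by induction on $j$, using the fact that $\Delta^{-1}$ acts as a cumulative-sum operator and that $1_{m,k}$ has constant columns (so the claim that the entry depends only on the row index $l$ is consistent).

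For the base case $j=0$, we have $\Delta^0 = I_m$, so $(1_{m,k})_{l,s} = 1$ for all $1\leq l\leq m$, and by definition $a_{0,l} = 1$ for $l\geq 1$, matching. For the inductive step, it is useful to first record the action of $\Delta^{-1}$ on a single column: since $\Delta$ is lower triangular with $1$ on the diagonal and $-1$ on the subdiagonal, one checks directly that $\Delta^{-1}$ is lower triangular with all entries equal to $1$ on and below the diagonal, so for any vector $w\in\CC^m$, $(\Delta^{-1}w)_l = \sum_{i=1}^l w_i$. Applied column-wise, this yields $(\Delta^{-1}M)_{l,s} = \sum_{i=1}^l M_{i,s}$ for any $M\in\CC^{m\times k}$.

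Assuming inductively that $(\Delta^{-j}1_{m,k})_{l,s} = a_{j,l}$ for all $1\leq l\leq m$ and $1\leq s\leq k$, we compute
\[
(\Delta^{-(j+1)}1_{m,k})_{l,s} = \bigl(\Delta^{-1}(\Delta^{-j}1_{m,k})\bigr)_{l,s} = \sum_{i=1}^{l} (\Delta^{-j}1_{m,k})_{i,s} = \sum_{i=1}^{l} a_{j,i}.
\]
Since $a_{j,i}=0$ for $i\leq 0$ by the recursive definition (one verifies inductively that $a_{l,s}=0$ for all $s\leq 0$), the sum $\sum_{i=1}^l a_{j,i}$ coincides with $\sum_{i\leq l} a_{j,i}$, which by definition equals $a_{j+1,l}$. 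This closes the induction.

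There is no serious obstacle here: the result is essentially the bookkeeping identity that repeated cumulative summation of the constant sequence produces the sequence $\{a_{j,l}\}_l$, and the double-sequence $a_{l,s}$ was defined precisely so as to make this statement tautological once one unwinds what $\Delta^{-1}$ does. The only point requiring a moment's care is verifying that $a_{j,i}=0$ for $i\leq 0$ (so that the truncation at $i=1$ does not lose terms), but this is immediate from the recursion since it holds at $l=0$ and is preserved when one sums only over $j\leq s\leq 0$.
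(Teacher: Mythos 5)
Your proof is correct and follows essentially the same route as the paper's: induction on $j$, using that $\Delta^{-1}$ acts as a cumulative sum and that $a_{j,i}=0$ for $i\leq 0$ so the truncated sum $\sum_{i=1}^l$ agrees with $\sum_{i\leq l}$. Your version is slightly more explicit about the base case and the structure of $\Delta^{-1}$, but the argument is the same.
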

\begin{proof}
	First, it can easily be seen that $a_{l,s}=0$ for all $s\leq0$ by induction on $l$. Then, by definition and induction on $j$,
\[
	(\Delta^{-j}1_{m,k})_{l,s}=\sum_{n=1}^l (\Delta^{-j+1}1_{m,k})_{n,s}=\sum_{n=1}^l a_{j-1,n}=\sum_{n\leq l}a_{j-1,n}=a_{j,l}.
\]

\end{proof}

\begin{lemma}
\label{counting_balls}
  For $1\leq\kappa\leq q$ and $1\leq l\leq\eta$,
 \[
        (\bar{\Delta}_\rho^\kappa\Delta^{-q}1_{m,1})_{l\rho}=\sum_{s_1,\dots,s_\kappa=1}^\rho a_{q-\kappa,(l-\kappa)\rho+s_1+\dots+s_\kappa}.
 \]
\end{lemma}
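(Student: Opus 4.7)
The plan is to prove the claim by induction on $\kappa$, using the explicit formula $(\Delta^{-q}1_{m,1})_l = a_{q,l}$ from the preceding lemma together with a telescoping identity extracted from the recursive definition of $\{a_{l,s}\}$.

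First I would record the telescoping identity. Since $a_{j,s} = \sum_{\sigma\leq s} a_{j-1,\sigma}$ for $j\geq 1$, a direct subtraction gives
\[
a_{j,\,s+\rho} - a_{j,\,s} \;=\; \sum_{t=1}^{\rho} a_{j-1,\,s+t},
\]
valid for every integer $s$ under the convention $a_{j,\sigma}=0$ for $\sigma\leq 0$ (both sides vanish in the degenerate range). This is the only combinatorial input needed beyond the previous lemma. With it, the base case $\kappa = 1$ is immediate: interpreting $\bar{\Delta}_\rho$ on a column vector $v$ as $(\bar{\Delta}_\rho v)_l = v_l - v_{l-\rho}$ (with $v_0$ taken to be $0$, which matches the structure of $\Delta^{-q}1_{m,1}$ at the required indices), one evaluates at $l\rho$ to get $a_{q,l\rho} - a_{q,(l-1)\rho}$, and the telescoping identity with $s=(l-1)\rho$ reproduces $\sum_{s_1=1}^{\rho} a_{q-1,(l-1)\rho+s_1}$.

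For the inductive step, set $v = \bar{\Delta}_\rho^{\kappa}\Delta^{-q}1_{m,1}$ and write
\[
(\bar{\Delta}_\rho^{\kappa+1}\Delta^{-q}1_{m,1})_{l\rho} \;=\; v_{l\rho} - v_{(l-1)\rho}.
\]
Invoking the induction hypothesis at both indices and letting $S := s_1+\dots+s_\kappa$, the right-hand side becomes
\[
\sum_{s_1,\dots,s_\kappa=1}^{\rho}\bigl[a_{q-\kappa,\,(l-\kappa)\rho+S} - a_{q-\kappa,\,(l-1-\kappa)\rho+S}\bigr].
\]
Each bracket is a telescoping increment of shift $\rho$, so applying the key identity with $j = q-\kappa$ and $s = (l-1-\kappa)\rho + S$ introduces a fresh summation variable $s_{\kappa+1}\in\{1,\dots,\rho\}$ and drops the first index of $a$ from $q-\kappa$ to $q-(\kappa+1)$, yielding exactly the claimed formula at level $\kappa+1$.

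The main obstacle is making sure the single-step rule $(\bar{\Delta}_\rho v)_l = v_l - v_{l-\rho}$ can be applied uniformly in the induction, despite the boundary modification in the definition of $\bar{\Delta}_\rho$ and despite the possibility that $(l-\kappa)\rho + S$ can become non-positive when $l$ is small. For the former, one checks that the extra entry of $\bar{\Delta}_\rho$ at the boundary contributes nothing when the input vector is of the form $\bar{\Delta}_\rho^{\kappa-1}\Delta^{-q}1_{m,1}$; for the latter, one observes that whenever $l<\kappa+1$ every admissible tuple satisfies $(l-(\kappa+1))\rho + S \leq 0$, so the right-hand side collapses to zero under the convention $a_{j,\sigma}=0$ for $\sigma\leq 0$, matching the fact that $v_0 = 0$ in the recursion. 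Once these boundary consistency points are verified, the induction closes cleanly.
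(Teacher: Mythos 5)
Your proposal is correct and follows essentially the same route as the paper: induction on $\kappa$, with the base case computed directly from $(\Delta^{-q}1_{m,1})_{l}=a_{q,l}$ and the inductive step closed by the telescoping identity $a_{j,s+\rho}-a_{j,s}=\sum_{t=1}^{\rho}a_{j-1,s+t}$, handling the $l=1$ boundary by noting that $s_1+\dots+s_\kappa\leq\kappa\rho$ forces the would-be subtracted term to vanish under the convention $a_{j,\sigma}=0$ for $\sigma\leq0$. The paper's proof does exactly this (treating $l=1$ and $l>1$ separately at each stage), so no further comparison is needed.
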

\begin{proof}
	We shall prove this by induction on $\kappa$. For $\kappa=1$ and $l>1$,
 \[
 	(\bar{\Delta}_\rho\Delta^{-q}1_{m,1})_{l\rho}=(\Delta^{-q}1_{m,1})_{l\rho}-(\Delta^{-q}1_{m,1})_{(l-1)\rho}=a_{q,l\rho}-a_{q,(l-1)\rho}=\sum_{s_1=1}^\rho a_{q-1,(l-1)\rho+s_1}.
 \]
 
	For $l=1$,
\[
	(\bar{\Delta}_\rho\Delta^{-q}1_{m,1})_\rho=(\Delta^{-q}1_{m,1})_{\rho}=a_{q,\rho}=a_{q,\rho}-a_{q,0}=\sum_{s_1=1}^\rho a_{q-1,0+s_1}.
\]

	For $1< \kappa\leq q$ and $l>1$,
\[
\begin{split}
	(\bar{\Delta}_\rho^\kappa \Delta^{-q}1_{m,1})_{l\rho}&=(\bar{\Delta}_\rho^{\kappa-1}\Delta^{-q}1_{m,1})_{l\rho}-(\bar{\Delta}_\rho^{\kappa-1}\Delta^{-q}1_{m,1})_{(l-1)\rho}\\
			&=\sum_{s_1,\dots,s_{\kappa-1}=1}^\rho(a_{q-\kappa+1,(l-\kappa+1)\rho+s_1+\dots+s_{\kappa-1}}-a_{q-\kappa+1,(l-\kappa)\rho+s_1+\dots+s_{\kappa-1}})\\
			&=\sum_{s_1,\dots,s_\kappa}^\rho a_{q-\kappa,(l-\kappa)\rho+s_1+\dots+s_\kappa}.
\end{split}
\]

	As for $l=1$,
\[
\begin{split}
	(\bar{\Delta}_\rho^\kappa \Delta^{-q}1_{m,1})_{\rho}&=(\bar{\Delta}_\rho^{\kappa-1}\Delta^{-q}1_{m,1})_\rho\\
			&=\sum_{s_1,\dots,s_{\kappa-1}=1}^\rho a_{q-\kappa+1,(1-\kappa+1)\rho+s_1+\dots+s_{\kappa-1}}\\
			&=\sum_{s_1,\dots,s_{\kappa-1}=1}^\rho a_{q-\kappa+1,(1-\kappa+1)\rho+s_1+\dots+s_{\kappa-1}}-a_{q-\kappa+1,(0-\kappa+1)\rho+s_1+\dots+s_{\kappa-1}}\\
			&=\sum_{s_1,\dots,s_{\kappa}}^\rho a_{q-\kappa,(1-\kappa)rho+s_1+\dots+s_\kappa},
\end{split}
\]
	where the third equality follows from the fact that $s_1+\dots+s_{\kappa-1}\leq (\kappa-1)\rho$.
\end{proof}

\begin{proposition}
\label{mutual_cancel}

	For $1\leq l\leq r$,
 \[
  D_\rho\bar{\Delta}_\rho^{l}1_{m,k}V\tilde{D}^{r-l}\tilde{C}^r-D_\rho\bar{\Delta}_\rho^r\Delta^{-r+l}1_{m,k}V\tilde{C}^l=\Delta^l\bigg(1_{\eta,k}V(\tilde{D}^{r-l}\tilde{C}^{r-l}-Id)+E_{r-l}\bigg)\tilde{C}^l,
  \]
  where $E_{r-l}=\tilde{B}1_{\eta,k}V$, and $\tilde{B}$ is a diagonal matrix with $|\tilde{B}_{i,i}|\leq\rho^{r-l}$ for all $i\leq r$ and $\tilde{B}_{i,i}=0$ otherwise.
\end{proposition}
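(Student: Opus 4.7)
The plan is to pull out a factor of $\Delta^l$ on the left and $\tilde{C}^l$ on the right from both terms, reducing the claim to a concrete counting computation handled by Lemma~\ref{counting_balls}.

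First, I would iterate Lemma~\ref{twist_scale} to deduce $D_\rho\bar{\Delta}_\rho^p = \Delta^p D_\rho$ for every $p\in\NN$. Applying this with $p=l$ to the first summand, and, after factorizing $\bar{\Delta}_\rho^r = \bar{\Delta}_\rho^l\bar{\Delta}_\rho^{r-l}$, also to the second summand, and using $D_\rho 1_{m,k} = 1_{\eta,k}$ together with $\tilde{C}^r = \tilde{C}^{r-l}\tilde{C}^l$, the left-hand side becomes
\[
\Delta^l\Bigl[\,1_{\eta,k}V\tilde{D}^{r-l}\tilde{C}^{r-l}\;-\;D_\rho\bar{\Delta}_\rho^{r-l}\Delta^{-(r-l)}1_{m,k}V\,\Bigr]\tilde{C}^l.
\]

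Second, I would compute $D_\rho\bar{\Delta}_\rho^{r-l}\Delta^{-(r-l)}1_{m,k}$ explicitly. Because each column of $1_{m,k}$ is the all-ones vector, it suffices to evaluate $(\bar{\Delta}_\rho^{r-l}\Delta^{-(r-l)}1_{m,1})_{i\rho}$ for $i = 1,\dots,\eta$. Setting $q = r - l$ in Lemma~\ref{counting_balls} and noting $a_{0,s}=1$ iff $s\ge 1$, this entry equals the number of tuples $(s_1,\dots,s_{r-l})\in\{1,\dots,\rho\}^{r-l}$ satisfying $(i-(r-l))\rho+s_1+\cdots+s_{r-l}\ge 1$. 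The constraint is automatic whenever $i\ge r-l$, giving the full count $\rho^{r-l}$; for $1\le i<r-l$ it yields a strictly smaller integer in $[0,\rho^{r-l}]$. Collecting row by row,
\[
D_\rho\bar{\Delta}_\rho^{r-l}\Delta^{-(r-l)}1_{m,k} = (\rho^{r-l}I - \tilde{B}')\,1_{\eta,k},
\]
with $\tilde{B}'$ a diagonal $\eta\times\eta$ matrix supported on the indices $i<r-l$ and bounded entrywise by $\rho^{r-l}$.

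Third, I would substitute this back and rearrange using
\[
1_{\eta,k}V\tilde{D}^{r-l}\tilde{C}^{r-l}-\rho^{r-l}1_{\eta,k}V = 1_{\eta,k}V(\tilde{D}^{r-l}\tilde{C}^{r-l}-I)+(1-\rho^{r-l})1_{\eta,k}V,
\]
which isolates the $1_{\eta,k}V(\tilde{D}^{r-l}\tilde{C}^{r-l}-I)$ term. The remaining scalar shift $(1-\rho^{r-l})I$ combined with the counting correction $\tilde{B}'$ then packages into a single diagonal matrix $\tilde{B}$ with $|\tilde{B}_{i,i}|\le\rho^{r-l}$, producing $E_{r-l}=\tilde{B}\,1_{\eta,k}V$ in the claimed form.

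The main obstacle is the second step: one has to carry the counting formula of Lemma~\ref{counting_balls} through to a clean split into a ``bulk'' $\rho^{r-l}$ contribution on the generic rows and a controlled boundary correction concentrated on the first few. Every other step is algebraic manipulation based on the commutation from Lemma~\ref{twist_scale} and the simultaneous diagonalizability of $\tilde{C}$ and $\tilde{D}$ via $B_\Phi$.
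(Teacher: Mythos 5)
Your first two steps reproduce the paper's proof exactly: iterate Lemma~\ref{twist_scale} to commute $D_\rho\bar{\Delta}_\rho^l=\Delta^lD_\rho$, factor $\bar{\Delta}_\rho^r=\bar{\Delta}_\rho^l\bar{\Delta}_\rho^{r-l}$ and $\tilde{C}^r=\tilde{C}^{r-l}\tilde{C}^l$ to reduce the left-hand side to $\Delta^l\bigl[1_{\eta,k}V\tilde{D}^{r-l}\tilde{C}^{r-l}-D_\rho\bar{\Delta}_\rho^{r-l}\Delta^{-(r-l)}1_{m,k}V\bigr]\tilde{C}^l$, and then use Lemma~\ref{counting_balls} with $\kappa=q=r-l$ to split $D_\rho\bar{\Delta}_\rho^{r-l}\Delta^{-(r-l)}1_{m,k}$ into the bulk $\rho^{r-l}1_{\eta,k}$ plus a correction supported on rows $i<r-l$ and bounded by $\rho^{r-l}$. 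Up to this point your argument and the paper's are the same.

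The gap is in your third step. Writing $1_{\eta,k}V\tilde{D}^{r-l}\tilde{C}^{r-l}-\rho^{r-l}1_{\eta,k}V=1_{\eta,k}V(\tilde{D}^{r-l}\tilde{C}^{r-l}-I)+(1-\rho^{r-l})1_{\eta,k}V$ is algebraically fine, but the leftover $(1-\rho^{r-l})1_{\eta,k}V$ corresponds to the diagonal matrix $(1-\rho^{r-l})I$, which is nonzero on \emph{every} index $i=1,\dots,\eta$, not only on $i\le r$. Since $\Delta^l$, $\tilde{C}^l$ are invertible and $\tilde{B}1_{\eta,k}$ determines the diagonal of $\tilde{B}$, the matrix $\tilde{B}=(1-\rho^{r-l})I+\tilde{B}'$ is forced, and it violates the support condition $\tilde{B}_{i,i}=0$ for $i>r$; that support condition is precisely what is used later (Proposition~\ref{lower_frame_bound} needs $\Delta^lE_{r-l}$ to live on finitely many initial rows). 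The underlying issue is that the statement as printed appears to contain a typo: the paper's own proof, and the way the result is invoked in Propositions~\ref{lower_frame_bound} and~\ref{variation_bound} (where one bounds $\|\frac{1}{\rho^{r-l}}\tilde{D}^{r-l}\tilde{C}^{r-l}-Id\|_2\le 2$), show the intended identity has $\tilde{D}^{r-l}\tilde{C}^{r-l}-\rho^{r-l}Id$ in place of $\tilde{D}^{r-l}\tilde{C}^{r-l}-Id$; indeed $\frac{1}{\rho}\tilde{D}\tilde{C}$ is the order-one object by Lemma~\ref{DC_LB}. With that correction your steps 1--2 already finish the proof and the extra repackaging in step 3 should simply be dropped; as written, step 3 asserts something false.
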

\begin{proof}

	From Lemma \ref{counting_balls}, we see that $(\bar{\Delta}_\rho^q\Delta^{-q}1_{m,1})_{l\rho}=\sum_{s_1,\dots,s_q=1}^\rho a_{0,(l-q)\rho+s_1+\dots+s_q}$. Thus, $(\bar{\Delta}_\rho^q\Delta^{-q}1_{m,1})_{l\rho}=|Z_{l,q}|$, where
\[
	Z_{l,q}=\{(s_1,\dots,s_q)\in\NN^q:\, 1\leq s_1,\dots,s_1\leq\rho,\, s_1+\dots+s_q>(q-l)\rho\}.
\]
	Note that $|Z_{l,q}|\leq \rho^q$, and $|Z_{l,q}|=\rho^q$ if $l\geq q$. Thus, $D_\rho\bar{\Delta}_\rho^q\Delta^{-q}1_{m,1}=\rho^q1_{\eta,1}-\tilde{b}$, where $\|\tilde{b}\|_\infty\leq\rho^q$ and $\tilde{b}_j=0$ for all $j\geq q$. Then, we have

\[
\begin{split}
	&\quad D_\rho\bar{\Delta}_\rho^{l}1_{m,k}V\tilde{D}^{r-l}\tilde{C}^r-D_\rho\bar{\Delta}_\rho^r\Delta^{-r+l}1_{m,k}V\tilde{C}^l\\
	&=D_\rho\bar{\Delta}_\rho^l\bigg(1_{m,k}V\tilde{D}^{r-l}\tilde{C}^{r-l}-\bar{\Delta}_\rho^{r-l}\Delta^{-(r-l)}1_{m,k}V\bigg)\tilde{C}^l\\
	&=\Delta^lD_\rho\bigg(1_{m,k}V\tilde{D}^{r-l}\tilde{C}^{r-l}-\bar{\Delta}_\rho^{r-l}\Delta^{-(r-l)}1_{m,k}V\bigg)\tilde{C}^l\\
	&=\Delta^l\bigg(1_{\eta,k}V\tilde{D}^{r-l}\tilde{C}^{r-l}-D_\rho\bar{\Delta}_\rho^{r-l}\Delta^{-(r-l)}1_{m,k}V\bigg)\tilde{C}^l\\
	&=\Delta^l\bigg(1_{\eta,k}V\tilde{D}^{r-l}\tilde{C}^{r-l}-\rho^{r-l}1_{\eta,k}V+\tilde{B}1_{\eta,k}V\bigg)\tilde{C}^l\\
	&=\Delta^l\bigg(1_{\eta,k}V(\tilde{D}^{r-l}\tilde{C}^{r-l}-\rho^{r-l}Id)+E_{r-l}\bigg)\tilde{C}^l.
\end{split}
\]

\end{proof}
\subsection{Lower Frame Bound Estimate}\label{LFBA}

	Now, we are able to answer the first question in Section \ref{proof}.
\begin{lemma}
\label{DC_LB}
	The $2$-norm of $(\frac{1}{\rho}\tilde{D}\tilde{C})^{-1}$ satisfies
	$\|(\frac{1}{\rho}\tilde{D}\tilde{C})^{-1}\|_2\leq\frac{\pi}{2}$.
\end{lemma}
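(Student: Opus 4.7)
The plan is to diagonalize $\frac{1}{\rho}\tilde{D}\tilde{C}$ using the unitary $B = B_\Phi$ that simultaneously diagonalizes $\tilde{C}$ and $\tilde{D}$ (as stated in Lemmas \ref{C_lemma} and \ref{D_lemma}), and then bound the eigenvalues away from zero. Since $B^\ast \tilde{C} B$ has diagonal entries $(1-e^{2\pi i \lambda_s/m})^{-1}$ and $B^\ast \tilde{D} B$ has diagonal entries $1 - e^{2\pi i \rho \lambda_s/m} = 1 - e^{2\pi i \lambda_s/\eta}$ (using $\rho/m = 1/\eta$), the matrix $\frac{1}{\rho}\tilde{D}\tilde{C}$ is unitarily equivalent to the diagonal matrix with entries
\[
\mu_s \;=\; \frac{1}{\rho}\cdot\frac{1 - e^{2\pi i \lambda_s/\eta}}{1 - e^{2\pi i \lambda_s/m}}, \qquad 1 \le s \le k.
\]
Since $B$ is unitary, $\|(\frac{1}{\rho}\tilde{D}\tilde{C})^{-1}\|_2 = \max_s |\mu_s|^{-1}$, so the task reduces to showing $|\mu_s| \ge 2/\pi$ for every $s$.

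Next, I would rewrite $|\mu_s|$ using the identity $|1 - e^{i\theta}| = 2|\sin(\theta/2)|$, giving
\[
|\mu_s| \;=\; \frac{1}{\rho}\cdot\frac{|\sin(\pi \lambda_s/\eta)|}{|\sin(\pi \lambda_s/m)|}.
\]
The hypotheses of Theorem \ref{main_uni} ensure $\lambda_s \in \ZZ \setminus\{0\}$ with $|\lambda_s| \le \eta/2$, so $\pi\lambda_s/\eta \in [-\pi/2,\pi/2]\setminus\{0\}$ and $\pi\lambda_s/m \in [-\pi/(2\rho),\pi/(2\rho)]\setminus\{0\}$; both arguments lie in $[-\pi/2,\pi/2]$, which is precisely the range where one can apply the two-sided Jordan inequality
\[
\tfrac{2}{\pi}|x| \;\le\; |\sin x| \;\le\; |x|, \qquad |x|\le \pi/2.
\]

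Applying the lower bound to the numerator and the upper bound to the denominator yields
\[
|\mu_s| \;\ge\; \frac{1}{\rho}\cdot\frac{(2/\pi)\,|\pi \lambda_s/\eta|}{|\pi \lambda_s/m|} \;=\; \frac{1}{\rho}\cdot\frac{2m}{\pi \eta} \;=\; \frac{2}{\pi},
\]
where the key cancellation $m/\eta = \rho$ cleans up the bound. Taking the maximum reciprocal over $s$ gives the claimed operator-norm inequality.

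I do not anticipate a substantial obstacle: the main point is to recognize that both the $\tilde{D}$ and $\tilde{C}$ factors become sines after the modulus identity, and that the condition $\lambda_s \in [-\eta/2,\eta/2] \cap \ZZ\setminus\{0\}$ from Theorem \ref{main_uni} was placed precisely to force $\pi\lambda_s/\eta$ into $[-\pi/2,\pi/2]$ so that Jordan's inequality applies with the sharp constant $2/\pi$. The exclusion $\lambda_s \ne 0$ is what keeps $\tilde{C}$ (and hence $\mu_s$) well-defined.
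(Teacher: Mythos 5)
Your proposal is correct and follows essentially the same route as the paper: diagonalize $\frac{1}{\rho}\tilde{D}\tilde{C}$ by the unitary $B$, reduce the norm bound to $\min_s\bigl|\frac{\sin(\pi\lambda_s/\eta)}{\rho\sin(\pi\lambda_s/m)}\bigr|\geq\frac{2}{\pi}$, and use the spectral assumption $\lambda_s\in[-\eta/2,\eta/2]\cap\ZZ\setminus\{0\}$. The only difference is that you carry out the final calculus step explicitly via the two-sided Jordan inequality, whereas the paper outsources it to a cited lemma; your version is self-contained and equally valid.
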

\begin{proof}
	To prove the lemma, it suffices to show that for any unit-norm vector $v$, $\|\frac{1}{\rho}\tilde{D}\tilde{C}v\|_2\geq\frac{2}{\pi}$. Note that $\tilde{D}$ and $\tilde{C}$ are simultaneously diagonalizable by the hermitian matrix $B$, so for any such $v$,
\[
\begin{split}
\|\frac{1}{\rho}\tilde{D}\tilde{C}v\|_2&=\|\frac{1}{\rho}B(B^\ast\tilde{D}B) (B^\ast\tilde{C}B)B^\ast v\|_2\\
			&=\left\|diag\left(\frac{1-e^{2\pi\imath\rho \lambda_s/m}}{\rho(1-e^{2\pi\imath\lambda_s/m})}\right)(B^\ast v)\right\|_2\\
			&\geq \min_{s\in\{1,\dots,k\}}\left|\frac{1-e^{2\pi\imath\rho \lambda_s/m}}{\rho(1-e^{2\pi\imath\lambda_s/m})}\right|\\
			&=\min_s\bigg|\frac{\sin(\pi\lambda_s/\eta)}{\rho\sin(\pi\lambda_s/m)}\bigg|\geq\min_{t\in[-\eta/2,\eta/2]}\bigg|\frac{\sin(\pi t/\eta)}{\rho\sin(\pi t/m)}\bigg|\geq\frac{2}{\pi},
\end{split}
\]
	where in the second equality, we note that since $B$ is unitary, $\|MB\|_2=\|BM\|_2=\|M\|_2$ for any matrix $M$, and $\|B^\ast v\|_2=\|v\|_2=1$. The second-to-last inequality comes from the assumption that $\{\lambda_s\}_{s=1}^k\subset[-\eta/2,\eta/2]$, and the final inequality can be obtained with simple calculus, see Lemma 4.5 in \cite{KL_2018}. 

\end{proof}
\begin{lemma}[Proposition 5.2, \cite{KL_2018}]
\label{Phi_LB}
	Given the assumption in Theorem \ref{main_uni} and $n$ satisfying $n\mid m$ and $m/n\geq k$, $\Phi_{m/n,k}^\ast$ has lower frame bound larger than $\frac{m}{n}\min_s|{<}\phi_0,v_s{>}|^2=\frac{m}{n}C_{\phi_0}$.

\end{lemma}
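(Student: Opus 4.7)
The plan is to diagonalize the frame operator $\Scal_M := \Phi_{m/n,k}^\ast \Phi_{m/n,k}$ (writing $M := m/n$ for brevity) in the eigenbasis of $\Omega$ and then read off its smallest eigenvalue. Using $U_t = B e^{2\pi\imath\Lambda t} B^\ast$ and setting $\alpha := B^\ast \phi_0$, so that $\alpha_s = \langle \phi_0, v_s\rangle$, each frame vector factors as $\phi_j^{(M)} = U_{j/M}\phi_0 = B D_j \alpha$, where $D_j := \mathrm{diag}(e^{2\pi\imath \lambda_s j/M})_{s=1}^k$. Consequently
\[
\Scal_M \;=\; \sum_{j=1}^M \phi_j^{(M)} (\phi_j^{(M)})^\ast \;=\; B\!\left[\sum_{j=1}^M D_j \alpha\alpha^\ast D_j^\ast\right]\!B^\ast.
\]

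The $(s,t)$-entry of the bracketed matrix is $\alpha_s \overline{\alpha_t}\sum_{j=1}^M e^{2\pi\imath (\lambda_s-\lambda_t) j/M}$, and the standard geometric-sum identity reduces this to $M\alpha_s\overline{\alpha_t}$ when $\lambda_s \equiv \lambda_t \pmod M$ and to $0$ otherwise. The crux is therefore to show that the map $\lambda_s \mapsto \lambda_s \bmod M$ is injective on the eigenvalues of $\Omega$. Once that is in hand, the bracketed matrix collapses to $M \cdot \mathrm{diag}(|\alpha_s|^2)$, and
\[
\Scal_M \;=\; M\cdot B \cdot \mathrm{diag}(|\alpha_s|^2) \cdot B^\ast,
\]
whose smallest eigenvalue is $M \min_s |\alpha_s|^2 = (m/n)\, C_{\phi_0}$. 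Since the lower frame bound of the frame with analysis operator $\Phi_{M,k}$ is exactly the smallest eigenvalue of $\Scal_M$, this delivers the stated bound.

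To obtain the modular injectivity I would exploit the three structural hypotheses: (i) the $\lambda_s$ are distinct nonzero integers in $[-\eta/2,\eta/2]$, (ii) $n \mid m$, and (iii) $M = m/n \geq k$. If two distinct eigenvalues satisfied $\lambda_s \equiv \lambda_t \pmod M$, then $M$ would have to divide a nonzero integer $\lambda_s - \lambda_t$ lying in $[-\eta,\eta]\setminus\{0\}$; combined with the fact that there are only $k$ distinct $\lambda_s$ concentrated in a window of length $\eta$ and that $M$ divides $m$, this forces a quantitative incompatibility and rules out any collision, matching the argument used for Proposition 5.2 in \cite{KL_2018}.

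The main obstacle is this modular-injectivity step: the spectral factorization, the geometric-sum reduction, and the final eigenvalue read-off are all routine, but the careful case analysis showing that no two distinct $\lambda_s$ can coincide modulo $M$ under the relatively weak hypothesis $m/n \geq k$ is where the delicate interplay between the spectral window $[-\eta/2,\eta/2]$, the divisibility $n\mid m$, and the count $M \geq k$ is essential.
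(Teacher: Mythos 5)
Your overall strategy --- conjugate the frame operator $\Phi_{M,k}^\ast\Phi_{M,k}$ by $B$, reduce each entry to a geometric sum over $M$-th roots of unity, and read off $M\min_s|\langle\phi_0,v_s\rangle|^2$ as the smallest eigenvalue --- is the natural one and is the computation underlying the cited Proposition 5.2 of \cite{KL_2018}; note that the present paper does not reprove this lemma but imports it, so the comparison is with that source. The genuine gap sits exactly at the step you yourself flag as the crux: the injectivity of $\lambda_s\mapsto\lambda_s\bmod M$. Your sketch claims that a collision would force $M$ to divide a nonzero integer $\lambda_s-\lambda_t\in[-\eta,\eta]$ and that this is ``quantitatively incompatible'' with $M=m/n\geq k$ and $n\mid m$. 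It is not: a nonzero difference of absolute value up to $\eta$ can perfectly well be a nonzero multiple of $M$ whenever $M\leq\eta$, and that is precisely the regime in which the lemma is invoked (Proposition \ref{lower_frame_bound} applies it with $m/n=k$ while $\eta\geq 3rk$). Concretely, take $k=2$, $r=1$, $\eta=6$, $\lambda_1=1$, $\lambda_2=3$ (both in $[-3,3]\cap\ZZ\setminus\{0\}$), and $M=2$: then $\lambda_1\equiv\lambda_2\pmod{2}$, so $U_{1/2}=-I_k$ and $U_1=I_k$, and $\Phi_{2,2}$ consists of $\{-\phi_0,\phi_0\}$, which has rank one and admits no positive lower frame bound. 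Hence no argument from the hypotheses you list can close your injectivity step.

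What this reveals is that the statement needs (and in \cite{KL_2018} effectively carries) an additional arithmetic hypothesis: the eigenvalues $\lambda_1,\dots,\lambda_k$ must be pairwise distinct modulo $m/n$. That is automatic when the spectral window $[-\eta/2,\eta/2]$ has length at most $m/n$ (e.g.\ $n=\rho$, up to the boundary pair $\pm\eta/2$), but not when $m/n<\eta$, which is the case actually used downstream. With that hypothesis added, your computation is complete and correct: the bracketed matrix collapses to $M\,\mathrm{diag}(|\alpha_s|^2)$, whose smallest eigenvalue is $\frac{m}{n}C_{\phi_0}$. As written, however, the decisive step is asserted rather than proved, and the assertion is false in the generality in which you (following the paper's restatement) formulate it.
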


Using Lemma \ref{DC_LB} and \ref{Phi_LB}, we are able to prove the following proposition:

\begin{proposition}
\label{lower_frame_bound}
	Suppose $\eta=m/\rho\geq k\cdot 3r$, then $A_r\Phi_{m,k}$ is a frame with lower frame bound larger than $kC_{\phi_0} (\frac{2}{\pi})^{2r}$, where $\phi_0=\sum_s c_sv_s$.
\end{proposition}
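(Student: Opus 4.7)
\textbf{Proof proposal for Proposition \ref{lower_frame_bound}.} The plan is to lower-bound $\|A_r\Phi_{m,k}v\|_2$ for an arbitrary unit vector $v\in\CC^k$ by decomposing $\rho^r A_r\Phi_{m,k}$ via Proposition \ref{deci_expansion} into a main term $D_\rho\Phi_{m,k}\tilde D^r\tilde C^r$ and a residual, the latter re-expressed using Proposition \ref{mutual_cancel} as
\[
\mathrm{Res} = \sum_{l=1}^{r-1}\Delta^l\big(1_{\eta,k}V(\tilde D^{r-l}\tilde C^{r-l}-Id) + E_{r-l}\big)\tilde C^l
\]
(the $l=r$ pair drops out since $\tilde D^0\tilde C^0-Id=0$ and the associated $\tilde B$ vanishes). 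The key structural fact for the main term is that $D_\rho\Phi_{m,k}=\Phi_{\eta,k}$ is itself a UGF, with the same generator $\Omega$ and base vector $\phi_0$ but sampled at $\eta$ points. Since $\eta\geq 3rk\geq k$, Lemma \ref{Phi_LB} applied with $n=\rho$ yields $\|\Phi_{\eta,k}w\|_2\geq\sqrt{\eta C_{\phi_0}}\,\|w\|_2$ for all $w\in\CC^k$, while Lemma \ref{DC_LB} raised to the $r$-th power gives $\|(\tilde D\tilde C)^r v\|_2\geq (2\rho/\pi)^r\|v\|_2$. Combining,
\[
\|D_\rho\Phi_{m,k}\tilde D^r\tilde C^r v\|_2 \geq \sqrt{\eta C_{\phi_0}}\,(2\rho/\pi)^r\,\|v\|_2.
\]

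For the residual, the plan is to exploit the rank-one column structure: since $1_{\eta,k}Vw = (\phi_0^T w)\mathbf{1}_\eta$, each summand collapses to a scalar times $\Delta^l\mathbf{1}_\eta = \Delta^{l-1}e_1$, whose $\ell^2$-norm is only $\sqrt{\binom{2(l-1)}{l-1}}$ rather than the $\sqrt{\eta}$ suggested by $\|1_{\eta,k}V\|_2$; the same reduction applies to $E_{r-l}$, which is supported on the first few coordinates with entries bounded by $\rho^{r-l}$. The scalar factors $\phi_0^T(\tilde D^{r-l}\tilde C^{r-l}-Id)\tilde C^l v$ can be controlled via the simultaneous diagonalization of $\tilde C$ and $\tilde D$ by $B_\Phi$, using $|\tilde d_s\tilde c_s|\leq\rho$ (the Dirichlet kernel bound) together with $|\tilde c_s|\leq m/(4|\lambda_s|)\leq m/4$, which follows from $|\lambda_s|\geq 1$. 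Summing over $l=1,\dots,r-1$ and invoking $\eta\geq 3rk$, the target estimate is
\[
\|\mathrm{Res}\,v\|_2 \leq (\sqrt{\eta}-\sqrt{k})\sqrt{C_{\phi_0}}\,(2\rho/\pi)^r\,\|v\|_2,
\]
which when subtracted from the main-term bound yields $\|\rho^r A_r\Phi_{m,k}v\|_2\geq\sqrt{kC_{\phi_0}}\,(2\rho/\pi)^r\|v\|_2$; dividing by $\rho^r$ then delivers the lower frame bound $kC_{\phi_0}(2/\pi)^{2r}$ and, a fortiori, that $A_r\Phi_{m,k}$ is a frame.

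The main obstacle will be the residual estimate. The crude bounds $\|\tilde C\|_2\leq m/4\sim\eta\rho$ and $\|\Delta\|_2\leq 2$ lead to residuals of order $\rho^r(\eta/2)^r\|v\|_2$, which swamp the main term when $\eta\gg 1$. The proof must therefore exploit both the rank-one structure of $1_{\eta,k}V$ (trading $\sqrt{\eta}$ for $\sqrt{\binom{2(l-1)}{l-1}}$) and the eigenvalue-level cancellation in $(\tilde D^{r-l}\tilde C^{r-l}-Id)$, applied term-by-term; it is this careful accounting, rather than any single operator-norm inequality, that makes the hypothesis $\eta\geq 3rk$ suffice to shrink the residual below $(\sqrt{\eta}-\sqrt{k})\sqrt{C_{\phi_0}}(2\rho/\pi)^r$.
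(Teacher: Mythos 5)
There is a genuine gap, and it is exactly at the step you flag as the "main obstacle": the residual estimate $\|\mathrm{Res}\,v\|_2\leq(\sqrt{\eta}-\sqrt{k})\sqrt{C_{\phi_0}}(2\rho/\pi)^r\|v\|_2$ is false in general, and no amount of careful accounting will rescue a triangle-inequality argument here, because the residual of $\rho^rA_r\Phi_{m,k}$ is genuinely \emph{large} in operator norm. Test the term $l=r-1$ on $v=v_{s_0}$ with $|\lambda_{s_0}|=1$: diagonalizing by $B_\Phi$, the scalar $\phi_0^\ast(\tilde D\tilde C-\rho\,Id)\tilde C^{r-1}v_{s_0}$ has modulus about $|c_{s_0}|\cdot\frac{\pi\rho}{\eta}\cdot\left(\frac{m}{2\pi}\right)^{r-1}=|c_{s_0}|\,\rho^r\,\frac{\pi\,\eta^{r-2}}{(2\pi)^{r-1}}$, since the cancellation in $\tilde d_{s}\tilde c_{s}-\rho$ buys only one factor of $\eta^{-1}$ while $|\tilde c_{s}|\approx m/(2\pi|\lambda_s|)$ contributes a full factor of $\eta\rho$ per power (the rank-one structure of $1_{\eta,k}V$ removes the $\sqrt{\eta}$ from the column direction but does nothing about $\tilde C^l$). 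For $r\geq3$ this single contribution grows like $\rho^r\eta^{r-2}$, which exceeds your budget $(\sqrt{\eta}-\sqrt{k})\sqrt{C_{\phi_0}}(2\rho/\pi)^r$ for large $\eta$; the $E_{r-l}$ pieces are worse still, of order $\rho^r\eta^{l}$, and this is consistent with the $\eta^{r-1}$ that survives in the paper's own Proposition \ref{variation_bound}. Even for $r=2$ your bound would force a condition like $\eta\gtrsim1/C_{\phi_0}$, which is not implied by $\eta\geq3rk$.

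The paper's proof sidesteps the size of the residual entirely by exploiting its \emph{localization}: $\Delta^l1_{\eta,k}$ is nonzero only on its first $l$ rows and $\Delta^lE_{r-l}$ only on its first $r+l$ rows, so outside the first $O(r)$ rows the matrix $A_r\Phi_{m,k}$ coincides exactly with $\frac{1}{\rho^r}\Phi_{\eta,k}\tilde D^r\tilde C^r$. Since deleting rows can only decrease a lower frame bound, it suffices to bound below the frame operator of a sub-collection of $k$ clean rows, which form $\frac{1}{\rho^r}\Phi_{k,k}\tilde D^r\tilde C^r$; Lemma \ref{Phi_LB} (with $m/n=k$) and Lemma \ref{DC_LB} then give the stated bound $kC_{\phi_0}(2/\pi)^{2r}$. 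This is also why the hypothesis $\eta\geq3rk$ appears: it guarantees enough rows untouched by the residual to extract such a sub-frame, not that the residual is small. Your main-term estimate (via Lemma \ref{Phi_LB} with $n=\rho$ and Lemma \ref{DC_LB}) and your observation that the $l=r$ pair vanishes are both correct, but the subtraction step they feed into cannot be completed.
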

\begin{proof}
	First, note that
\[
\begin{split}
	&\quad D_\rho\bar{\Delta}_\rho^r\Delta^{-r}\Phi_{m,k}\\
	&=D_\rho\bigg[\Phi_{m,k}\tilde{D}^r\tilde{C}^r+(\bar{\Delta}_\rho^r 1_{m,k}V+\dots+\bar{\Delta}_\rho1_{m,k}V\tilde{D}^{r-1})\tilde{C}^r-\bar{\Delta}_\rho^r(1_{m,k}V\tilde{C}^r+\dots+(\Delta^{-r+1}1_{m,k}V)\tilde{C})\bigg]\\
	&=\Phi_{\eta,k}\tilde{D}^r\tilde{C}^r+D_\rho\sum_{l=1}^r(\bar{\Delta}_\rho^l1_{m,k}V\tilde{D}^{r-l}\tilde{C}^r-\bar{\Delta}_\rho^r\delta^{-r+l}1_{m,k}V\tilde{C}^l)\\
	&=\Phi_{\eta,k}\tilde{D}^r\tilde{C}^r+\sum_{l=1}^r\Delta^l D_\rho\bigg[1_{m,k}V\tilde{D}^{r-l}\tilde{C}^{r-l}-\bar{\Delta}_\rho^{r-l}\Delta^{-(r-l)}1_{m,k}V\bigg]\tilde{C}^l\\
	&=\Phi_{\eta,k}\tilde{D}^r\tilde{C}^r+\sum_{l=1}^r\Delta^l[1_{\eta,k}V(\tilde{D}^{r-l}\tilde{C}^{r-l}-I_k)]\tilde{C}^l+\Delta^l E_{r-l}\tilde{C}^{l}.
\end{split}
\]

	Now, note that $\Delta^l 1_{\eta,k}$ has nonzero entries on only the first $l$ rows. For $\Delta^l E_{r-l}$, only the first $r+l$ entries can be nonzero. Thus, the $l\cdot \lfloor\eta/k\rfloor$-th rows of $A_r\Phi_{m,k}$ is equal to the one of $\frac{1}{\rho^r}\Phi_{\eta,k}\tilde{D}^r\tilde{C}^r$. Now, the lower frame bound of $A_r\Phi_{m,k}$ is larger than the one of any of its sub-frame. In particular, its lower frame bound is larger than the one of $\frac{1}{\rho^r}\Phi_{k,k}\tilde{D}^r\tilde{C}^r$, which is $kC_{\phi_0}(\frac{2}{\pi})^{2r}$, since for any unit-norm vector $v$,
\[
	\|\frac{1}{\rho^r}\Phi_{k,k}\tilde{D}^r\tilde{C}^r v\|_2^2\geq kC_{\phi_0}\|(\frac{1}{\rho}\tilde{D}\tilde{C})^r v\|_2^2\geq kC_{\phi_0}\left(\frac{2}{\pi}\right)^{2r}.
\]

\end{proof}
\subsection{Frame Variation Bound} \label{FVB}

	In \eqref{E_est_RP}, we also need to estimate $\|(A_r\Phi_{m,k})^\ast\Delta^r\|_{\infty,2}$. To do so, we first invoke the frame variation result from \cite{KL_2018} to estimate the contribution from the main term.

\begin{lemma}[\cite{KL_2018}, Lemma 7.6]
\label{uni_var}
	For any $r\in\NN$, 
\[
	\|\Phi_{m/\rho}^\ast\Delta^r\|_{\infty,2}=\sum_{s=1}^{m/\rho}\|\Phi_{m/\rho}^\ast\Delta^r e_s\|_2\leq r2^r+\eta(2\pi\max_{1\leq j\leq k}|\lambda_j|\frac{1}{\eta})^r, 
\]

where $(e_s)_{j}=\delta(s-j)$, the $s$-th canonical coordinate.
\end{lemma}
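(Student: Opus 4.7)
The plan is to bound the operator norm by the sum of column norms, then estimate each column by exploiting the unitary structure of the UGF. Write $\eta = m/\rho$. The first step is the standard inequality $\|M\|_{\infty,2} \leq \sum_{s}\|Me_s\|_2$, which follows from $\|Mv\|_2 = \|\sum_s v_s Me_s\|_2 \leq \|v\|_\infty\sum_s\|Me_s\|_2$. Applied to $M = \Phi_\eta^\ast \Delta^r$, it suffices to bound $\sum_{s=1}^\eta \|\Phi_\eta^\ast \Delta^r e_s\|_2$.

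Next I would compute $\Delta^r e_s$ explicitly. Since $\Delta$ is lower-triangular with $1$ on the diagonal and $-1$ on the subdiagonal, one checks by induction that the $(j,s)$-entry of $\Delta^r$ equals $(-1)^{j-s}\binom{r}{j-s}$ when $0 \leq j - s \leq r$ and is zero otherwise. Thus for ``interior'' indices $s$ with $s + r \leq \eta$, $\Delta^r e_s = \sum_{l=0}^{r}(-1)^l \binom{r}{l}\,e_{s+l}$, whereas for the last $r$ ``boundary'' indices the analogous sum is truncated. I would split the column sum accordingly.

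For an interior $s$, using $\Phi_\eta^\ast e_j = \phi_j^{(\eta)} = U_{j/\eta}\phi_0$ and the group property $U_{(s+l)/\eta} = U_{s/\eta}U_{l/\eta}$,
\[
\Phi_\eta^\ast \Delta^r e_s = U_{s/\eta}\bigg(\sum_{l=0}^r(-1)^l\binom{r}{l} U_{l/\eta}\bigg)\phi_0.
\]
Since $U_{s/\eta}$ is unitary, the norm is independent of $s$. Expanding $\phi_0 = \sum_j c_j v_j$ in the eigenbasis of $\Omega$ turns the inner bracket into $\sum_j c_j (1 - e^{2\pi i \lambda_j/\eta})^r v_j$, whose squared $\ell^2$-norm is $\sum_j |c_j|^2 (2\sin(\pi\lambda_j/\eta))^{2r}$. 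The hypothesis $\{\lambda_j\} \subset [-\eta/2,\eta/2]$ lets me apply $|\sin(\pi t)| \leq \pi|t|$ for $|t| \leq 1/2$, and $\sum_j |c_j|^2 = \|\phi_0\|_2^2 = 1$, yielding the per-column bound $(2\pi\max_j|\lambda_j|/\eta)^r$. There are at most $\eta$ interior indices, contributing $\eta(2\pi\max_j|\lambda_j|/\eta)^r$ in total. For each of the $r$ boundary indices I use the crude estimate $\|\Phi_\eta^\ast \Delta^r e_s\|_2 \leq \sum_{l:s+l\leq \eta}\binom{r}{l}\|\phi_{s+l}^{(\eta)}\|_2 \leq 2^r$ (each frame vector is a unit vector because $\|\phi_0\|_2 = 1$ and $U_{\cdot}$ is unitary), contributing at most $r\cdot 2^r$. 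Adding the two pieces gives the claimed $r\,2^r + \eta(2\pi\max_j|\lambda_j|/\eta)^r$.

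The main subtlety is the interior/boundary split: because $\Delta$ is lower-triangular rather than cyclic, the clean factorization through the eigenbasis only works when the full binomial stencil fits inside $[1,\eta]$, so the last $r$ columns must be handled separately by a blunt triangle-inequality bound. Ensuring that the boundary contribution $r\,2^r$ is uniform in $\eta$ (rather than growing with it) is what makes the two regimes combine into a single workable estimate.
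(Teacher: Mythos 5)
Your proof is correct, and it is a faithful reconstruction of the standard frame-variation argument that this lemma imports: the paper itself gives no proof here, citing Lemma 7.6 of \cite{KL_2018}, whose bound $r2^r+\eta(2\pi\max_j|\lambda_j|/\eta)^r$ visibly arises exactly as you derive it ($r$ truncated boundary columns each bounded by $2^r$ via the triangle inequality, plus at most $\eta$ interior columns where $(I-U_{1/\eta})^r\phi_0$ is estimated in the eigenbasis of $\Omega$). The only cosmetic discrepancy is that the stated ``$=$'' between $\|\Phi^\ast\Delta^r\|_{\infty,2}$ and the column sum is really only a ``$\leq$'' in general, which is all you use and all the subsequent error estimate requires.
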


	Now, we can estimate the $\infty$-to-$2$ norm of $(A_r\Phi_{m,k})^\ast\Delta^r$.

\begin{proposition}
\label{variation_bound}
\[
	\|(A_r\Phi_{m,k})^\ast\Delta^r\|_{\infty,2}\leq 2^{2r+2}\eta^{r-1}.
\]
\end{proposition}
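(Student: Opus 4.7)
The plan is to use the decomposition from Propositions \ref{deci_expansion} and \ref{mutual_cancel} to write
\[
\rho^r(A_r\Phi_{m,k})^\ast\Delta^r = \bigl((\tilde{D}\tilde{C})^r\bigr)^\ast\Phi_{\eta,k}^\ast\Delta^r + \sum_{l=1}^r (\tilde{C}^l)^\ast\bigl[1_{\eta,k}V(\tilde{D}^{r-l}\tilde{C}^{r-l}-I_k)+E_{r-l}\bigr]^\ast(\Delta^l)^\ast\Delta^r,
\]
and bound the main term and each residual term separately.

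For the main term, I first establish the companion spectral bound to Lemma \ref{DC_LB}, namely $\|\tilde{D}\tilde{C}\|_2 \leq \pi\rho/2$, by an eigenvalue computation using $|\sin(\pi\lambda_s/\eta)|\leq \pi|\lambda_s|/\eta$ for the numerator and $|\sin(\pi\lambda_s/m)| \geq 2|\lambda_s|/m$ for the denominator. Combined with the frame-variation estimate of Lemma \ref{uni_var}, which gives $\|\Phi_{\eta,k}^\ast\Delta^r\|_{\infty,2}\leq r2^r + \eta\pi^r$ under $\max_j|\lambda_j|\leq\eta/2$, the main term contributes at most $(\pi/2)^r(r2^r + \eta\pi^r)$ after dividing by $\rho^r$.

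For the residual terms, the essential structural observation is that $\Delta^l$ acting on $1_{\eta,k}V$ times a diagonal matrix (whose rows are all equal) produces a matrix supported in its first $l$ rows only, while $\Delta^l$ acting on $E_{r-l}=\tilde{B}1_{\eta,k}V$ (which itself is supported in the first $r$ rows, since $\tilde{B}$ is so supported) produces a matrix supported in its first $r+l$ rows. Combined with the Vandermonde identity $\bigl((\Delta^l)^\ast\Delta^r\bigr)_{i,s} = (-1)^{i-s}\binom{l+r}{r+s-i}$, which makes $(\Delta^l)^\ast\Delta^r$ banded of width $l+r$, the $l$-th residual has at most $r+l \leq 2r$ nonzero columns. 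Each such column is then bounded by combining the spectral estimates $\|(\tilde{D}\tilde{C})^{r-l}\|_2\leq(\pi\rho/2)^{r-l}$, $\|E_{r-l}\|_2\leq \rho^{r-l}\sqrt{\eta}$, and elementary row-sum estimates for $\Delta^l\mathbf{1}_\eta$.

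The main obstacle is controlling the unpaired $\tilde{C}^l$ factor in the residuals: in contrast to the main term, where $\tilde{C}^r$ is always paired with $\tilde{D}^r$ into the uniformly bounded $(\tilde{D}\tilde{C})^r$, the factor $\tilde{C}^l$ on its own has spectral norm growing like $(m/4)^l$. The cancellations in Proposition \ref{mutual_cancel} are precisely what is needed: the dangerous $\tilde{C}^l$ only multiplies matrices of small row-support, so its contribution (after the normalization $\rho^{-r}$ and summing over the few nonzero columns) scales as a fixed power of $\eta$ times $4^r$. Adding the main-term bound to the residual bounds for $l=1,\ldots,r$, and using the standing assumption $\eta\geq 3rk$ to absorb lower-order terms, then yields the stated estimate $\|(A_r\Phi_{m,k})^\ast\Delta^r\|_{\infty,2}\leq 2^{2r+2}\eta^{r-1}$.
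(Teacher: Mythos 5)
Your proposal follows essentially the same route as the paper: the identical decomposition from Propositions \ref{deci_expansion} and \ref{mutual_cancel}, Lemma \ref{uni_var} for the main term, and the small row-support of the residual factors together with the band structure of the difference operators to show that the unpaired $\tilde{C}^l$, once normalized as $\rho^{-l}\tilde{C}^l$, contributes only a power of $\eta$. The one substantive difference is the main term, where the paper uses the sharper Dirichlet-kernel bound $\|\rho^{-r}\tilde{D}^r\tilde{C}^r\|_2\leq 1$ (from $|\sin(\rho\theta/2)|\leq\rho|\sin(\theta/2)|$) while your two-sided sine estimates only give $(\pi/2)^r$; this extra factor is still absorbed by $2^{2r+2}\eta^{r-1}$ for $r\geq 2$, and for $r=1$ the term $\eta\pi^r$ already exceeds the stated constant in the paper's own chain of inequalities, so the discrepancy there is inherited from the paper rather than introduced by you.
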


\begin{proof}
	From Proposition \ref{deci_expansion} and \ref{mutual_cancel}, we see that
\[
	D_\rho\bar{\Delta}^r\Delta^{-r}\Phi_{m,k}=\Phi_{\eta,k}\tilde{D}^r\tilde{C}^r+\sum_{l=1}^{r}\Delta^l\bigg(1_{\eta,k}V(\tilde{D}^{r-l}\tilde{C}^{r-l}-\rho^{r-l}Id)+E_{r-l}\bigg)\tilde{C}^l.
\]
	Thus,
\[
\begin{split}
	&\quad \|(A_r\Phi_{m,k})^\ast\Delta^r\|_{\infty,2}=\|\frac{1}{\rho^r}(D_\rho\bar{\Delta}^r\Delta^{-r}\Phi_{m,k})^\ast\Delta^r\|_{\infty,2}\\
	&\leq\|\frac{1}{\rho^r}\tilde{D}^r\tilde{C}^r\|_2\|\Phi_{\eta,k}^\ast\Delta^r\|_{\infty,2}+2\sum_{l=1}^r\|\frac{1}{\rho^l}\tilde{C}^l\|_2\|\frac{1}{\rho^{r-l}}\tilde{D}^{r-l}\tilde{C}^{r-l}-Id\|_2\|V^\ast1_{k,\eta}\Delta^{l+r}\|_{\infty,2},
\end{split}
\]

	where we observe that $\|\frac{1}{\rho^{r-l}}E_{r-l}^\ast\Delta^{r+l}\|_{\infty,2}\leq\|V^\ast1_{k,\eta}\Delta^{r+l}\|_{\infty,2}$.
	
	Now, $\|\Phi_{\eta,k}^\ast\Delta^r\|_{\infty,2}\leq r2^r+\eta(2\pi\max_{1\leq j\leq k}|\lambda_j|\frac{1}{\eta})^r$ by Lemma \ref{uni_var}, and $\|V^\ast1_{k,\eta}\Delta^{l+r}\|_{\infty,2}=2^{l+r-1}\|\phi_0\|_2=2^{l+r-1}$. Moreover, $\|\frac{1}{\rho^r}\tilde{D}^r\tilde{C}^r\|_2\leq1$, $\|\frac{1}{\rho^{r-l}}\tilde{D}^{r-l}\tilde{C}^{r-l}-Id\|_2\leq2$, and $\|\frac{1}{\rho^l}\tilde{C}^l\|_2\leq\eta^l$. Thus,
\[
\begin{split}
	\|\frac{1}{\rho^r}(D_\rho\bar{\Delta}^r\Delta^{-1}\Phi_{m,k})^\ast\Delta^r\|_{\infty,2}&\leq r2^r+\eta(2\pi\max_{1\leq j\leq k}|\lambda_j|\frac{1}{\eta})^r+2^{r+1}\frac{(2\eta)^r-1}{2\eta-1}\\
			&\leq r2^r+\eta(2\pi\max_{1\leq j\leq k}|\lambda_j|\frac{1}{\eta})^r+2^{2r+1}\eta^{r-1}\leq 2^{2r+2}\eta^{r-1},
\end{split}
\]
	independent of $m$.
\end{proof}

\subsection{Data Storage Efficiency}\label{DSE}

	Given a mid-rise quantizer with length $2L$ and the quantized sample $q\in\CC^m$, one needs $\log(2L)$ bits to record each entry of $q$. Thus, a total of $m\log(2L)=O(\rho)$ bits is needed to fully record $q$ as $\rho\to\infty$. In this section, we shall show that with the application of adapted decimation, we may now record the decimated signal in $O(\log(\rho))$ bits, drastically fewer than originally needed.

\begin{proposition}
\label{data_storage}
	Given a mid-rise quantizer with length $2L$, it is possible to encode $D_\rho\bar{\Delta}_\rho^r\Delta^{-r}q$ with $2\eta r\log(2m)+2\eta\log(2L)$ bits in total.
\end{proposition}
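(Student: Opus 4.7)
The plan is to count the number of possible values that each entry of $D_\rho\bar{\Delta}_\rho^r\Delta^{-r}q$ can take, exploiting the integer-lattice structure of the quantization alphabet together with the fact that $\rho^r\bar{\Delta}_\rho^r\Delta^{-r}$ has integer entries.

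First I would iterate Lemma \ref{twist_scale} to rewrite $D_\rho\bar{\Delta}_\rho^r=(\Delta^{(\eta)})^r D_\rho$, so that
\[
D_\rho\bar{\Delta}_\rho^r\Delta^{-r}q=(\Delta^{(\eta)})^r D_\rho\Delta^{-r}q.
\]
This replaces the awkward product on the left by the clean composition: $r$-fold cumulative sum of $q$, sub-sampling by $\rho$, then ordinary $r$-th order backward difference in dimension $\eta$. Since $\mathscr{A}_0$ consists of odd multiples of $\delta/2$, every entry of $q$ lies in $(\delta/2)\ZZ+\imath(\delta/2)\ZZ$ with $|\Re q_j|,|\Im q_j|<L\delta$, and this lattice is preserved under the integer matrices $\Delta^{-r}$, $D_\rho$, and $(\Delta^{(\eta)})^r$. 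It therefore remains to bound the magnitude of each output entry.

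For the magnitude, write $((\Delta^{(\eta)})^r D_\rho\Delta^{-r}q)_l=\sum_j c_{l,j}q_j$ with
\[
c_{l,j}=\sum_{i=0}^r(-1)^i\binom{r}{i}\binom{(l-i)\rho-j+r-1}{r-1}_+ .
\]
I would identify $c_{l,j}$ as a discrete B-spline at scale $\rho$: by a short combinatorial argument paralleling Lemma \ref{counting_balls} and Proposition \ref{mutual_cancel}, the coefficients $c_{l,j}$ equal the number of tuples $(s_1,\dots,s_r)\in\{1,\dots,\rho\}^r$ whose sum takes a prescribed value; in particular they are non-negative, supported in a window of length $r\rho$, and $\sum_j c_{l,j}=\rho^r$ (which also follows from applying the operator to the all-ones vector away from the boundary). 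Consequently the real part of the $l$-th entry equals $(\delta/2)(2\sum_j c_{l,j}k_j+\rho^r)$ with each $k_j\in\{-L,\dots,L-1\}$, so it is an integer multiple of $\delta/2$ with a fixed parity lying in $[-(2L-1)\rho^r\,\delta/2,\,(2L-1)\rho^r\,\delta/2]$, and similarly for the imaginary part.

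Hence each real (or imaginary) part takes at most $(2L-1)\rho^r+1\leq 2L(2m)^r$ distinct values (using $\rho\leq m$ and $r\geq 1$), requiring $r\log(2m)+\log(2L)$ bits. Each of the $\eta$ complex entries therefore fits in $2r\log(2m)+2\log(2L)$ bits, and the total is $2\eta r\log(2m)+2\eta\log(2L)$ as claimed. The main obstacle is establishing the sharp coefficient identity $\sum_j c_{l,j}=\rho^r$ via the B-spline identification; a crude bound through $\|\,\cdot\,\|_{\infty\to\infty}$ norms would introduce extra factors of $2^r$ and $\binom{m+r-1}{r}$ and lose the clean base $2m$ in the logarithm.
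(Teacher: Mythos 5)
Your argument is correct and lands on the stated bit count, but it is a sharpened version of the paper's proof rather than an equivalent one, and your closing assessment of the ``crude'' alternative is backwards. The paper's proof \emph{is} precisely that crude route: each row of $\Delta^{-1}$ has at most $m$ nonzero entries, all equal to $1$, so after $r$ iterations every entry of $\Delta^{-r}q$ stays on the lattice $\frac{\delta}{2}(\ZZ+\imath\ZZ)$ with real and imaginary parts of size at most $Lm^{r}\delta$; each row of $\bar{\Delta}_\rho$ then touches at most two entries, contributing the factor $2^{r}$ --- this is exactly where the base $2m$ in $\log(2m)$ comes from, so nothing is ``lost'' and no factor $\binom{m+r-1}{r}$ appears, because the paper bounds row supports factor by factor rather than composing $\ell^\infty\to\ell^\infty$ norms of the full product. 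Your identification of the coefficients of $(\Delta^{(\eta)})^{r}D_\rho\Delta^{-r}$ as nonnegative discrete B-splines with row sums at most $\rho^{r}$ (equal to $\rho^{r}$ away from the first $r$ rows, by the same counting as Lemma \ref{counting_balls}) is a genuine strengthening: it shows each real or imaginary part takes at most $(2L-1)\rho^{r}+1$ values, i.e.\ roughly $2\eta r\log\rho+2\eta\log(2L)$ bits would already suffice, which you then deliberately relax to the stated $(2m)^{r}$ count. Both proofs thus share the same skeleton --- lattice preservation, an $\ell^{1}$ bound on the rows of the integer operator, then a logarithm --- and differ only in that you compute the row sums exactly (via the rewriting $D_\rho\bar{\Delta}_\rho^{r}=(\Delta^{(\eta)})^{r}D_\rho$ from Lemma \ref{twist_scale}, which the paper reserves for the main theorem) where the paper simply multiplies per-factor support sizes. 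The extra precision buys a slightly smaller exponent but is not needed for the proposition as stated.
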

\begin{proof}
	Note that for mid-rise uniform quantizers $\mathscr{A}=\mathscr{A}_0+\imath\mathscr{A}_0$ with length $2L$, each entry $q_j$ of $q$ is a number of the form 
\[
q_j=\big((2s_j+1)+\imath(2t_j+1)\big)\frac{\delta}{2},\quad-L\leq s_j,\, t_j\leq L-1.
\] 
	Then, each entry in $\Delta^{-1} q$ is the summation of at most $m$ entries in $q$, which has the form 
\[
	(\Delta^{-1} q)_j=\big((2\tilde{s}_j+\rho)+\imath(2\tilde{t}_j+\rho)\big)\frac{\delta}{2},\quad-Lm\leq \tilde{s}_j,\, \tilde{t}_j\leq (L-1)m.
\]	
	Iterating $r$ times, we see that 
\[
	(\Delta^{-r}q)_j=\big((2\tilde{s}_j+\rho)+\imath(2\tilde{t}_j+\rho)\big)\frac{\delta}{2},\quad-Lm^r\leq \tilde{s}_j,\, \tilde{t}_j\leq (L-1)m^r.
\]	
	As for $\bar{\Delta}_\rho^r \Delta^{-r}q$, we see that, for any $v\in\CC^m$, each entry of $\bar{\Delta}_\rho v$ contains at most $2$ entries of $v$. Thus,
\[
	(\bar{\Delta}_\rho^r\Delta^{-r}q)_j=\big((2\tilde{s}_j+\rho)+\imath(2\tilde{t}_j+\rho)\big)\frac{\delta}{2},\quad-L(2m)^r\leq \tilde{s}_j,\, \tilde{t}_j\leq (L-1)(2m)^r
\]
	Now, there are at most $((2L-1)(2m)^r+1)^2\leq (2L(2m)^r)^2$ choices per entry with $\eta=m/\rho$ entries in total for $D_\rho\bar{\Delta}_\rho^r\Delta^{-r}q$. Thus, it can be encoded by $\mathscr{R}=2\eta r\log(2m)+2\eta\log(2L)$ bits.

\end{proof}

\subsection{Proof of Theorem \ref{main_uni}}\label{main_proof}

\begin{proof} of Theorem \ref{main_uni}:

	By Lemma \ref{twist_scale}, 
\[
\rho^rA_rq=D_\rho\bar{\Delta}_\rho^r\Delta^{-r}q=\Delta^r D_\rho(\Delta^{-r}q).
\]
	Since $\Delta$ and $\Delta^{-1}$ are lower-triangular, we see that, for any $1\leq s\leq \eta$, there exists $\{a_j^s\}_{j=1}^s$ and $\{b_j^l\}_{j,l}$ such that
\[
	(A_rq)_s=\sum_{j=1}^s a_j^s(D_\rho\Delta^{-r}q)_j=\sum_{j=1}^sa_j^s(\Delta^{-r}q)_{j\rho}=\sum_{j=1}^sa_j^s\sum_{l=1}^{j\rho}b_l^j q_l=\sum_{\xi=1}^{s\rho}c_\xi q_\xi,
\]

	proving the first claim. The second assertion follows from Proposition \ref{lower_frame_bound}.

	Given $\Phi=\Phi_{m,k}$, $A=A_r=\frac{1}{\rho^r}D_\rho\bar{\Delta}_\rho^r\Delta^{-r}$, and $\Scal=(A\Phi)^\ast A\Phi$, the reconstruction error can be estimated as follows:

\[
\begin{split}
	\mathscr{E}=\|\Scal^{-1} (A\Phi)^\ast Aq-x\|_2&=\|\Scal^{-1}(A\Phi)^\ast A\Delta^r u\|_2\\
					&=\frac{1}{\rho^r}\|\Scal^{-1}(A\Phi)^\ast D_\rho\bar{\Delta}_\rho^r u\|_2\\
					&=\frac{1}{\rho^r}\|\Scal^{-1}(A\Phi)^\ast \Delta^r D_\rho u\|_2\\
					&\leq\frac{1}{\rho^r}\|\Scal^{-1}\|_2\|(A\Phi)^\ast\Delta^r\|_{\infty,2}\|D_\rho u\|_\infty\\
					&\leq\frac{1}{\rho^r}\big(kC_{\phi_0}(\frac{2}{\pi})^{2r}\big)^{-1}2^{2r+2}\eta^{r-1}\|u\|_\infty\\
					&=\bigg(\frac{4}{k\eta C_{\phi_0}}(\pi^2\eta)^r\bigg)\|u\|_\infty\frac{1}{\rho^r},
\end{split}
\]
	where the second inequality comes from Proposition \ref{lower_frame_bound} and Proposition \ref{variation_bound}.
		
	As for the data storage, we see from Proposition \ref{data_storage} that one can encode the data $A_rq$ with $\mathscr{R}=2\eta r\log(2m)+2\eta\log(2L)$ bits in total. 
	
	Note that
\[
	e^{\frac{-\mathscr{R}}{2\eta}}=(2m)^{-r}\cdot\frac{1}{2L}=\frac{1}{2L}\left(\frac{\eta}{2}\right)^r\cdot \frac{1}{\rho^r}.
\]
	
	Thus, as the function of bits used, the reconstruction error satisfies
\[
\begin{split}
	\mathscr{E}(\mathscr{R})&\leq\bigg(\frac{4}{k\eta C_{\phi_0}}(\pi^2\eta)^r\bigg)\|u\|_\infty\frac{1}{\rho^r}\\
		&= C_{k,\eta,\phi_0,L}\|u\|_\infty \frac{1}{2L}\left(\frac{\eta}{2}\right)^r \frac{1}{\rho^r}\\
		&= C_{k,\eta,\phi_0,L}\|u\|_\infty e^{\frac{-\mathscr{R}}{2\eta}},
\end{split}
\]
	where $C_{k,\eta,\phi_0,L}=\frac{8L}{k\eta C_{\phi_0}}(2\pi^2)^r$.

\end{proof}

\section{Acknowledgement}
The author would like to thank the support from ARO Grant W911NF-17-1-0014, NSF-DMS Grant 1814253, and J. Benedetto for all the thoughtful advice and insights. Further, the author appreciates the constructive analysis and suggestions of the referees.


\bibliographystyle{amsplain}

\bibliography{ref}

\end{document}